\documentclass[aps,prl,reprint,twocolumn]{revtex4-2}
\usepackage[T1]{fontenc} 
\usepackage[utf8]{inputenc} 
\usepackage{amsmath,amssymb,bm}
\usepackage{graphicx}
\usepackage{subcaption} 
\usepackage[justification=raggedright,singlelinecheck=false]{caption} 
\usepackage{xcolor} 
\begin{document}
\newtheorem{theorem}{Theorem}
\newtheorem{corollary}{Corollary}
\newtheorem{proposition}{Proposition}

\def\be{\begin{equation}}
\def\en#1{\label{#1}\end{equation}}
 \def\S{\mathcal{S}}
\def\D{\mathcal{D}}
 \def\vare{\varepsilon}
\newcommand{\per}{\mathrm{per}}

\title{  Intrinsic  Indistinguishability of Identical Particles and How Particle Labels Affect It}

\author{Valery Shchesnovich }
\affiliation{Centro de Ci\^encias Naturais e Humanas, Universidade Federal do
ABC, Santo Andr\'e,  SP, 09210-170 Brazil }
\date{\today}

\begin{abstract}
We investigate indistinguishability of identical bosons and fermions undergoing arbitrary particle-number-preserving evolutions of their visible degrees of freedom. For the projective indistinguishability measure, defined by the projection of the visible state onto the symmetric/anti-symmetric subspace, we derive an equivalent expression in terms of the dynamically invariant internal state. We further generalize the textbook symmetrization/anti-symmetrization framework for bosons and fermions to arbitrary partial distinguishability by deriving an explicit reconstruction formula for the multiparticle visible state in terms of the indistinguishability function encoding the dynamical invariants. We   give   complete  characterization of the  class-functions of indistinguishability by   projective  measures on generalized  symmetries. Finally, we reveal a strikingly counterintuitive effect: introducing additional particle label states can increase the multiparticle indistinguishability of identical particles. The effect originates from the cancellation of collective multiparticle phases.
 \end{abstract}

\maketitle

\paragraph{Introduction.}
The concept of partial indistinguishability originates from the Hong–Ou–Mandel two-photon interference experiment \cite{HOM}, where the visibility of the coincidence counts is determined by the degree of exchange symmetry of the two-photon state. Permutation symmetry of the internal state likewise governs interference phenomena involving two fermions \cite{ElecHOM}, two atoms \cite{AtomHOM}, and quantum walks of two entangled photons \cite{TwFerQW}. More generally, permutation symmetries of multiphoton states determine their interference and bunching properties \cite{Ou1,Ou2}. Multiparticle interference of identical bosons and fermions exhibits both statistics-dependent effects \cite{GenHOM} and phenomena arising from more general symmetry principles \cite{SymBeyBS,ZeroTranSymm,Symm4Dist,MetHOM}.

A complete characterization of partial distinguishability for more than two identical particles remains a surprisingly challenging problem \cite{Shch2015,Tch2015,WeylD}. Genuinely multiparticle effects are remarkably abundant and include non-monotonic quantum-to-classical transitions \cite{NonMon4ph}, collective multiparticle phases \cite{3phPhase,DistMix3ph,nphPhases}, interference involving subsets of distinguishable particles \cite{DistPhInter}, efficient methods for characterizing multiparticle indistinguishability \cite{MultPhInd,Distchar}, and  non-trivial     bunching  properties \cite{VS2016,BCount,Geller2026}.

An universal order on the   indistinguishability of identical particles  is therefore  impossible. The most important measure of indistinguishability is the probability of the ideal case, given by the projection onto the symmetric subspace of the Hilbert space of internal states \cite{Shch2014,Shch2015}. This quantity provides an upper bound on the distinguishability error \cite{Shch2015A} in Boson Sampling \cite{AA}, where large-scale multiparticle interference \cite{20ph60mod} must compete with increasingly powerful classical simulation methods \cite{SimBSdist}. It is expected to play a similarly important role in photonic approaches to universal quantum computation \cite{LOC,RevLOC}, which rely on the indistinguishability of photons. 

In this work, we derive the form of the visible state of partially distinguishable bosons and fermions occupying fixed orthogonal modes in terms of their partial distinguishability function. We further show that the projective indistinguishability measure  admits an equivalent expression in terms of the internal state over dynamically invariant degrees of freedom, thereby establishing its invariance under arbitrary particle-number-preserving quantum evolutions, whether linear or nonlinear. 
We give a complete characterization of class-function indistinguishability through projective measures on generalized symmetry sectors, revealing, in particular, nontrivial partially distinguishable states sharing the projective indistinguishability measure of maximally distinguishable particles.

Finally, we uncover a strikingly counterintuitive phenomenon: adding extra labels to partially distinguishable particles can make identical particles \textit{less} distinguishable. Specifically, for $n=7$ identical bosons or fermions prepared in the same polarization (or spin) state and with partially overlapping temporal wave functions, multiparticle indistinguishability increases when differences are introduced into their polarization (or spin) states. The effect originates from the cancellation of collective multiparticle phases in the combined internal-state overlaps.

\paragraph{Indistinguishability measure for identical particles.} 
We consider $n$ identical particles, bosons or fermions,  and partition the single-particle Hilbert space as $\mathcal{H}=\mathcal{H}_{(vis)}\otimes \mathcal{H}_{(int)}$, where the visible (operated-on) degrees of freedom are subject to particle-number-preserving unitary evolutions (linear or non-linear), while the internal   degrees of freedom remain invariant under such evolutions.  
Consider a quantum state $\hat{\varrho}$ of $n$ identical particles at the input of a quantum channel consisting of a unitary evolution of the visible state,
$\hat{\varrho}_{(vis)}=\mathrm{Tr}_{(int)} \hat{\varrho}$, followed by particle-number resolving detection.   We are interested in the probability   $p_{\bm{m}}$ of observing an occupation vector  $\bm{m}=(m_1,m_2,\ldots,)$, $|\bm{m}|=m_1+m_2+\ldots =n$, in some  basis of   visible modes.  Irrespective of whether the internal states of the particles are resolved at the detection stage, $p_{\bm{m}}$ is completely determined by   $\hat{\varrho}_{(vis)}$.  We compare $p_{\bm{m}}$ with the corresponding distribution  $p^{(i)}_{\bm{m}}$ for completely indistinguishable particles. To introduce the latter, recall the projectors onto the symmetric/anti-symmetric subspaces of   $\mathcal{H}^{\otimes n}$:
\be
\hat{S}^{(\pm)} = \frac{1}{n!}\sum_{\sigma} \vare(\sigma) \hat{P}_\sigma, \quad \vare(\sigma) = \left\{ \begin{array}{cc} 1, & \mathrm{bosons},\\ \mathrm{sgn}(\sigma), & \mathrm{fermions},\end{array} \right.
\en{Svare}
where the sum runs over the  symmetric group $S_n$ of $n$ objects, $\mathrm{sgn}(\sigma)$ is the signature of the permutation $\sigma$  and 
\be
\hat{P}_\sigma|\phi_1\rangle\otimes\ldots \otimes  |\phi_n\rangle = |\phi_{\sigma^{-1}(1)}\rangle\otimes \ldots \otimes |\phi_{\sigma^{-1}(n)}\rangle
\en{Pdef}
is  the  unitary representation of $\sigma$ in    $\mathcal{H}^{\otimes n}$.     The ideal counterpart of  $\hat{\varrho}$ is the state   $\hat{\varrho}^{(i)}\equiv \hat{\varrho}_{(vis)}^{(i)}\otimes \left(|\psi\rangle\langle\psi|\right)^{\otimes n} $, with an arbitrary $|\psi\rangle$ and its visible part given by
\be
\hat{\varrho}_{(vis)}^{(i)} :=\frac{\hat{S}^{(\pm)} \hat{\varrho}_{(vis)} \hat{S}^{(\pm)}}{\mathrm{Tr}\left( \hat{S}^{(\pm)} \hat{\varrho}_{(vis)} \right)}.
\en{varrho_i}
We assume that      $\hat{S}^{(\pm)} \hat{\varrho}_{(vis)} \ne 0$, i.e.,  the visible state possesses its ideal counterpart~\cite{Note}.    


Quantum channel applies an  arbitrary particle-number-preserving unitary evolution $\hat{U}\otimes \hat{I}$, where $\hat{U}$ acts on $\mathcal{H}_{(vis)}^{\otimes n}$   and leaves the internal degrees of freedom invariant.  For  identical particles, such evolutions  satisfy permutation symmetry, $\hat{P}_\sigma \hat{U}\hat{P}^\dag_\sigma = \hat{U}$, e.g., the  factorized  operators $\hat{U} = \hat{\mathcal{U}}^{\otimes n}$ of  unitary linear interferometers.    Nonlinear models such as   mesoscopic scattering models \cite{MCMS}, interacting boson models \cite{NBE}, and nonlinear Boson Sampling \cite{NBS} belong to the class.

Particle-number-resolving detection  is described by  the projectors $ \hat{\Pi}_{\bm{m}}\equiv \frac{n!}{\bm{m}!}\bigotimes_{\alpha=1}^n|\ell_\alpha\rangle\langle \ell_\alpha|$, $\sum_{\bm{m}}\hat{\Pi}_{\bm{m}}=\hat{I}$ onto visible   occupation vectors  $\bm{m}$,
  where the states $|\ell\rangle$, $\ell=1,2 \ldots$, form a basis of $\mathcal{H}_{(vis)}$~\cite{Note2}, and $\bm{m}!\equiv \prod_j m_j!$.  The multinomial factor 
 $n!/\bm{m}!$  accounts for the   equivalent sets   of output modes $\ell_\alpha$.  Such particle-number-resolving detection acts at the output of a unitary linear interferometer \cite{Shch2014,Shch2015}. The  probability distribution becomes   
\be
  p_{\bm{m}}=\mathrm{Tr}\{ \hat{\Pi}_{\bm{m}}\hat{U} \hat{\varrho}_{(vis)}\hat{U}^\dag\}.
\en{pm}

Let us now study the maximal total variation distance between the probability distributions $p_{\bm{m}}$ and $p^{(i)}_{\bm{m}}$ corresponding to the ideal counterpart. This maximum is known \cite{BookNC} to be   the trace distance between the visible states:
\begin{eqnarray}
&&  \mathrm{max}\,d(p,p^{(i)})\equiv \underset{\hat{U}, \hat{\Pi}_{\bm{m}}} {\mathrm{max} }
\left\{\frac12\sum_{\bm{m}}| p_{\bm{m}} - p^{(i)}_{\bm{m}}|\right\} \nonumber\\
&&  \equiv \frac12\mathrm{Tr}\left|\hat{\varrho}_{(vis)}-\hat{\varrho}_{(vis)}^{(i)}\right|\equiv d(\hat{\varrho}_{(vis)},\hat{\varrho}_{(vis)}^{(i)}) . 
\label{Trd}
\end{eqnarray}
There is another  physically transparent expression for $d(\hat{\varrho}_{(vis)},\hat{\varrho}_{(vis)}^{(i)})$. 
To this goal, we now introduce the indistinguishability measure $\mathcal{D}(\hat{\varrho})$ as the complement of the maximal trace distance in  Eq.~(\ref{Trd}):
\be 
 \mathcal{D} (\hat{\varrho}):=1- d(\hat{\varrho}_{(vis)},\hat{\varrho}_{(vis)}^{(i)}) .
 \en{Trd_ind}
We have the following result (proven  in Appendix A). 
\begin{theorem}
\be
\mathcal{D} (\hat{\varrho}) = \mathrm{Tr}\left( \hat{S}^{(\pm)}\hat{\varrho}_{(vis)} \right) =\mathrm{Tr}\left( \hat{S}^{(+)}\hat{\varrho}_{(int)}\right),
\en{Dint}
where $\hat{\varrho}_{(int)}= \mathrm{Tr}_{(vis)} \hat{\varrho}$. 
\end{theorem}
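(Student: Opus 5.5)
The plan is to prove the two equalities in Eq.~(\ref{Dint}) separately. Both rest on one structural fact. Because the particles are identical, the full state $\hat{\varrho}$ on $\mathcal{H}^{\otimes n}=(\mathcal{H}_{(vis)}\otimes\mathcal{H}_{(int)})^{\otimes n}$ is supported on the symmetric (bosons) or antisymmetric (fermions) subspace. Write $\hat{P}_\sigma=\hat{P}^{(vis)}_\sigma\otimes\hat{P}^{(int)}_\sigma$, where $\hat{P}_\sigma$ acts on the full space as in Eq.~(\ref{Pdef}) and the factors act on the visible and internal parts. The structural fact is then
\be
\hat{P}_\sigma\hat{\varrho}=\hat{\varrho}\hat{P}_\sigma=\vare(\sigma)\hat{\varrho}\quad \text{for all } \sigma\in S_n.
\en{permsym}

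\textbf{First equality.} Tracing out the internal degrees of freedom in Eq.~(\ref{permsym}) gives
\[
\hat{P}^{(vis)}_\sigma\hat{\varrho}_{(vis)}\hat{P}^{(vis)\dagger}_\sigma=\hat{\varrho}_{(vis)}.
\]
The factors $\vare(\sigma)$ cancel, and the internal factor drops out by cyclicity of the partial trace over the internal tensor factor. Hence $\hat{\varrho}_{(vis)}$ commutes with every $\hat{P}^{(vis)}_\sigma$, and therefore with the projector $\hat{S}^{(\pm)}$ of Eq.~(\ref{Svare}) acting on $\mathcal{H}_{(vis)}^{\otimes n}$. Set $\hat{S}\equiv\hat{S}^{(\pm)}$ and $p\equiv\mathrm{Tr}(\hat{S}\hat{\varrho}_{(vis)})$. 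Commutation gives the block decomposition
\[
\hat{\varrho}_{(vis)}=\hat{S}\hat{\varrho}_{(vis)}\hat{S}+(\hat{I}-\hat{S})\hat{\varrho}_{(vis)}(\hat{I}-\hat{S}).
\]
Using Eq.~(\ref{varrho_i}), this yields
\[
\hat{\varrho}_{(vis)}-\hat{\varrho}^{(i)}_{(vis)}=\left(1-\tfrac1p\right)\hat{S}\hat{\varrho}_{(vis)}\hat{S}+(\hat{I}-\hat{S})\hat{\varrho}_{(vis)}(\hat{I}-\hat{S}).
\]
The two terms are positive multiples, up to sign, of positive operators with orthogonal supports. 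So the trace norm is the sum of their traces, $(1/p-1)p+(1-p)=2(1-p)$. Then $d=1-p$, and Eq.~(\ref{Trd_ind}) gives $\mathcal{D}(\hat{\varrho})=p$.

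\textbf{Second equality.} Expand $p=\frac{1}{n!}\sum_\sigma\vare(\sigma)\,\mathrm{Tr}\big[(\hat{P}^{(vis)}_\sigma\otimes\hat{I})\hat{\varrho}\big]$ and factor
\[
\hat{P}^{(vis)}_\sigma\otimes\hat{I}=\hat{P}_\sigma\,(\hat{I}\otimes\hat{P}^{(int)}_{\sigma^{-1}}).
\]
By cyclicity of the trace and the right-hand version of Eq.~(\ref{permsym}),
\[
\mathrm{Tr}\big[\hat{P}_\sigma(\hat{I}\otimes\hat{P}^{(int)}_{\sigma^{-1}})\hat{\varrho}\big]=\mathrm{Tr}\big[(\hat{I}\otimes\hat{P}^{(int)}_{\sigma^{-1}})\hat{\varrho}\hat{P}_\sigma\big]=\vare(\sigma)\,\mathrm{Tr}\big[\hat{P}^{(int)}_{\sigma^{-1}}\hat{\varrho}_{(int)}\big].
\]
Since $\vare(\sigma)^2=1$, the sum becomes $\frac{1}{n!}\sum_\sigma\mathrm{Tr}[\hat{P}^{(int)}_{\sigma^{-1}}\hat{\varrho}_{(int)}]$. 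Reindexing $\sigma\to\sigma^{-1}$ turns this into $\mathrm{Tr}(\hat{S}^{(+)}\hat{\varrho}_{(int)})$ for both bosons and fermions.

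\textbf{Main obstacle.} The step needing the most care is justifying Eq.~(\ref{permsym}) for mixed states, and getting the commutation step right in the first equality. The support condition $\hat{\varrho}=\hat{S}_{\mathrm{full}}\hat{\varrho}\hat{S}_{\mathrm{full}}$ is exactly the definition of a state of identical particles. From it, $\hat{P}_\sigma\hat{S}_{\mathrm{full}}=\vare(\sigma)\hat{S}_{\mathrm{full}}$ gives Eq.~(\ref{permsym}) on both sides. The assumption $\hat{S}^{(\pm)}\hat{\varrho}_{(vis)}\neq0$ guarantees $p>0$, so $\hat{\varrho}^{(i)}_{(vis)}$ is well defined. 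Invariance of $\mathcal{D}$ under the evolution $\hat{U}\otimes\hat{I}$ then follows immediately, since $\hat{\varrho}_{(int)}$ is unchanged by it.
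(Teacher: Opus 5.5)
Your proof is correct and follows essentially the paper's route: you get the trace distance from the splitting of $\hat{\varrho}_{(vis)}-\hat{\varrho}^{(i)}_{(vis)}$ into orthogonally supported semidefinite pieces on the range of $\hat{S}^{(\pm)}$ and on its complement. Your second equality is the traced, termwise form of the paper's identity $(\hat{S}^{(\pm)}_{(vis)}\otimes\hat{I})\hat{S}^{(\pm)}=(\hat{I}\otimes\hat{S}^{(+)}_{(int)})\hat{S}^{(\pm)}$ combined with $\hat{S}^{(\pm)}\hat{\varrho}=\hat{\varrho}$. Your explicit proof that $\hat{\varrho}_{(vis)}$ commutes with $\hat{S}^{(\pm)}$ supplies a step the paper uses only implicitly, namely when it treats $\hat{\varrho}_{(vis)}-\hat{S}^{(\pm)}\hat{\varrho}_{(vis)}\hat{S}^{(\pm)}$ as positive semidefinite.
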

Equation~(\ref{Dint}) gives the probability that the visible state of bosons or fermions belongs to the ideal symmetric/anti-symmetric subspace (for fermions, provided that the ideal counterpart exists). For bosons, the first expression in Eq.~(\ref{Dint}) was adopted in Ref.~\cite{DistNew}.

By appropriately entangling the internal degrees of freedom, bosons can emulate fermionic behavior and vice versa \cite{TwFerQW,BSF}.  The proof of Theorem~1  points on   analogous measure for such an emulation (see Appendix A).
\begin{corollary}
The following projective measure       
\be
\widetilde{\mathcal{D}}(\hat{\varrho}) := \mathrm{Tr}\left( \hat{S}^{(\mp)}\hat{\varrho}_{(vis)} \right) =\mathrm{Tr}\left( \hat{S}^{(-)}\hat{\varrho}_{(int)}\right)
\en{emD}
gives the projection probability onto the opposite exchange-symmetry sector and therefore quantifies the degree of boson–fermion emulation.
\end{corollary}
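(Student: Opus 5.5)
The plan is to reuse the mechanism behind Theorem~1: the exchange symmetry of the full $n$-particle state ties permutations acting on the visible factor to permutations acting on the internal factor. First, with $\mathcal{H}=\mathcal{H}_{(vis)}\otimes\mathcal{H}_{(int)}$, the operator in Eq.~(\ref{Pdef}) factorizes as $\hat{P}_\sigma=\hat{P}^{(vis)}_\sigma\otimes\hat{P}^{(int)}_\sigma$. Second, since $\hat{\varrho}$ describes identical bosons (fermions), it is supported on the symmetric (antisymmetric) subspace of $\mathcal{H}^{\otimes n}$. This gives $\hat{P}_\sigma\hat{\varrho}=\vare(\sigma)\hat{\varrho}$ for every $\sigma\in S_n$, with $\vare$ as in Eq.~(\ref{Svare}).

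Next, multiply this identity on the left by $\hat{I}\otimes\hat{P}^{(int)\dagger}_\sigma=\hat{I}\otimes\hat{P}^{(int)}_{\sigma^{-1}}$. This yields
\be
(\hat{P}^{(vis)}_\sigma\otimes \hat{I})\hat{\varrho}=\vare(\sigma)\,(\hat{I}\otimes\hat{P}^{(int)}_{\sigma^{-1}})\hat{\varrho}.
\en{planswap}
Taking the full trace and using the partial-trace definitions of $\hat{\varrho}_{(vis)}$ and $\hat{\varrho}_{(int)}$ gives the single-permutation identity
$\mathrm{Tr}(\hat{P}^{(vis)}_\sigma\hat{\varrho}_{(vis)})=\vare(\sigma)\,\mathrm{Tr}(\hat{P}^{(int)}_{\sigma^{-1}}\hat{\varrho}_{(int)})$.
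In the proof of Theorem~1 one sums this with weights $\vare(\sigma)/n!$. Because $\vare(\sigma)^2=1$, the internal side then collapses to $\hat{S}^{(+)}$ for either statistics.

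For the corollary I would instead sum with the opposite weights $\vare'(\sigma)/n!$. Here $\vare'=\mathrm{sgn}$ for bosons and $\vare'=1$ for fermions, so that the visible side becomes $\mathrm{Tr}(\hat{S}^{(\mp)}\hat{\varrho}_{(vis)})$. In both cases the product $\vare'(\sigma)\vare(\sigma)$ equals $\mathrm{sgn}(\sigma)$. Relabeling $\sigma\to\sigma^{-1}$ and using $\mathrm{sgn}(\sigma^{-1})=\mathrm{sgn}(\sigma)$ turns the internal side into $\frac{1}{n!}\sum_\sigma\mathrm{sgn}(\sigma)\,\mathrm{Tr}(\hat{P}^{(int)}_\sigma\hat{\varrho}_{(int)})=\mathrm{Tr}(\hat{S}^{(-)}\hat{\varrho}_{(int)})$, which is Eq.~(\ref{emD}).

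The interpretation as a projection probability onto the opposite exchange-symmetry sector is then immediate, because $\hat{S}^{(\mp)}$ is an orthogonal projector and $\hat{\varrho}_{(vis)}$ is a density matrix.

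The only step needing care is Eq.~(\ref{planswap}). One has to make sure the inverse permutation lands on the internal factor with the correct orientation under the convention of Eq.~(\ref{Pdef}), and that $\hat{P}^{(int)}_\sigma$ is unitary, so that $\hat{P}^{(int)\dagger}_\sigma=\hat{P}^{(int)}_{\sigma^{-1}}$. This orientation turns out to be harmless, because the sign character is invariant under inversion. Everything else is bookkeeping of signs.
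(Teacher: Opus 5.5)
Your argument is correct and is essentially the paper's proof. The paper derives the averaged operator identity $(\hat{S}^{(\pm)}_{(vis)}\otimes\hat{I})\hat{S}^{(\pm)}=(\hat{I}\otimes\hat{S}^{(+)}_{(int)})\hat{S}^{(\pm)}$ from the multiplicativity of $\vare$ and then says to swap $\hat{S}^{(\pm)}\to\hat{S}^{(\mp)}$; you do the same thing permutation by permutation at the trace level and then average, and your relation $\vare'(\sigma)\vare(\sigma)=\mathrm{sgn}(\sigma)$ makes explicit that only the visible-side projector is swapped (while $\hat{\varrho}$ stays in the $\hat{S}^{(\pm)}$ sector), which is why the internal side becomes $\hat{S}^{(-)}$ for both statistics.
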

 The  two dynamically invariant quantities in Eqs.~(\ref{Dint})-(\ref{emD})  have  direct operational interpretation as distances to the ideal indistinguishable-particle and ideal emulated-particle limits, respectively. Their   physical significance and generalizations are  explored below.
 
 \paragraph{Visible state via the indistinguishability function.}

The second expression in Eq.~(\ref{Dint}) reduces to the projection measure introduced in Refs.~\cite{Shch2014,Shch2015,Shch2015A,VS2016}. That measure is expressed through the indistinguishability function of identical particles,
\be
J_{\hat{\varrho}}(\sigma)
:=
\mathrm{Tr}(\hat{P}_\sigma\hat{\varrho}^{(l)}),
\en{distJ}
where $\hat{\varrho}^{(l)}$ is the state of particle labels  associated with $\hat{\varrho}$. Importantly, $J_{\hat{\varrho}}(\sigma)$ contains all information about $\hat{\varrho}$ relevant to interference in unitary linear interferometers, $\hat{U}=\hat{\mathcal U}^{\otimes n}$ \cite{Shch2014,Shch2015}. Conversely, by using a set of interferometers, the indistinguishability function $J_{\hat{\varrho}}(\sigma)$ can be obtained   from multiparticle interferences    \cite{PartDistInv,MultPhInd}.

Let us introduce the label state  and derive   the visible component for  the  general pure   state $\hat{\varrho} =|\Psi_{\bm{n}}\rangle \langle \Psi_{\bm{n}}|$ of $n$ identical  particles in $r\le n$ orthogonal visible modes, with $n_k$ particles in mode $k$, 
\be	
 |\Psi_{\bm{n}}\rangle = \sum_{j_1,\ldots,j_n} C_{j_1,\ldots,j_n} \frac{\prod_{\alpha=1}^n \hat{a}^\dag_{k_\alpha,j_\alpha}}{\sqrt{\bm{n}!}}|0\rangle,
\en{Psi_nSQ}
where   $k_1,\ldots, k_n$ is nondecreasing sequence  of modes,  $\bm{n}\equiv (n_1,\ldots, n_r)$, 
  and $|j\rangle$, $j=1,2,3,\ldots$ is a  basis in $\mathcal{H}_{(int)}$. The coefficients   $C_{j_1,\ldots,j_n}$  can be chosen symmetric/anti-symmetric with respect to the Young subgroup $\mathcal{Y}_{\bm{n}}\equiv S_{n_1}\otimes S_{n_2}\otimes \ldots \otimes S_{n_r}$, namely 
\be
C_{j_{\sigma(1)},\ldots,j_{\sigma(n)}}  = \vare(\sigma) C_{j_1,\ldots,j_n}, \quad \forall \sigma \in \mathcal{Y}_{\bm{n}}.
\en{ED8}
We have  $ \sum_{j_1,\ldots,j_n}|C_{j_1,\ldots,j_n}|^2 = 1.$  The corresponding  label state  is $\hat{\varrho}^{(l)} = |\Psi^{(l)}_{\bm{n}}\rangle \langle \Psi^{(l)}_{\bm{n}}|$ with 
\be
|\Psi^{(l)}_{\bm{n}}\rangle:= \sum_{j_1,\ldots,j_n}C_{j_1,\ldots,j_n}\bigotimes_{\alpha=1}^n |j_\alpha\rangle.
\en{Psi_nlab}
In general,   $\hat{\varrho}_{(int)} \ne\hat{\varrho}^{(l)}$,  instead Eq.~(\ref{Psi_nSQ}) gives 
\be
  \hat{\varrho}_{(int)} = \mathrm{Tr}_{(vis)}(|\Psi_{\bm{n}}\rangle \langle \Psi_{\bm{n}}|) = \frac{1}{n!}\sum_\sigma \hat{P}_\sigma  \hat{\varrho}^{(l)} \hat{P}^\dag_\sigma.
  \en{symrhoint} 
The  indistinguishability function, Eq.~(\ref{distJ}), inherits the Young-subgroup symmetry of Eq.~(\ref{ED8}), we have 
\be
  J_{\hat{\varrho}}(\sigma\pi) =  J_{\hat{\varrho}}(\pi^{-1}\sigma)=\vare(\pi) J_{\hat{\varrho}}(\sigma), \quad \forall \pi \in \mathcal{Y}_{\bm{n}}. 
\en{JsymY}
We have the following result (proven in Appendix B).
\begin{theorem}
The visible component to the state in  Eq.~(\ref{Psi_nSQ}) is
\begin{eqnarray}
\label{vis_state}
\hat{\varrho}_{(vis)} \!= \!\frac{1}{n!} \sum_{\sigma,\pi} \vare(\pi\sigma)  J_{\hat{\varrho}}(\pi^{-1}\sigma) \hat{P}_\sigma \Biggl[\frac{1}{\bm{n}!}\bigotimes\limits_{\alpha=1}^n | k_\alpha\rangle\langle k_\alpha|\Biggr]\hat{P}^\dag_\pi. \nonumber\\
 \end{eqnarray} 
Conversely,  every  visible state of bosons/fermions  with   occupation vector   $\bm{n}$ in modes  $k=1,\ldots, r$ is in the form of  Eq.~(\ref{vis_state}) for some   positive semidefinite  function $J(\sigma)$,  
  \be
  \sum_{\sigma,\pi} Z^*_\pi J(\pi^{-1}\sigma) Z_\sigma \ge 0, \quad \forall Z_\sigma\in \mathbb{C}, 
  \en{psdJ}
   satisfying  Eq.~(\ref{JsymY}) and $J(e)=1$ ($e$  is  the  identity in $S_n$).    
\end{theorem}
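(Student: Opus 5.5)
The plan is to compute $\hat{\varrho}_{(vis)}$ directly in first quantization. First, rewrite the second-quantized state of Eq.~(\ref{Psi_nSQ}) as a vector in $\mathcal{H}^{\otimes n}$. The standard correspondence $\prod_\alpha \hat a^\dag_{k_\alpha,j_\alpha}|0\rangle \leftrightarrow \sqrt{n!}\,\hat S^{(\pm)}\bigotimes_\alpha |k_\alpha\rangle|j_\alpha\rangle$ gives
\begin{equation*}
|\Psi_{\bm n}\rangle=\sqrt{\frac{n!}{\bm n!}}\,\hat S^{(\pm)}\Bigl[\bigotimes_{\alpha}|k_\alpha\rangle\Bigr]\otimes|\Psi^{(l)}_{\bm n}\rangle ,
\end{equation*}
with $\hat P_\sigma$ acting jointly on visible and label factors, so $\hat P_\sigma=\hat P^{(vis)}_\sigma\otimes\hat P^{(l)}_\sigma$.

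I will verify the normalization using the Young-subgroup symmetry Eq.~(\ref{ED8}). Permutations in $\mathcal Y_{\bm n}$ fix $\bigotimes_\alpha|k_\alpha\rangle$, and on $|\Psi^{(l)}_{\bm n}\rangle$ they act by $\vare(\pi)$. The norm is therefore $\frac{1}{\bm n!}\sum_{\sigma}\vare(\sigma)\langle \otimes k|\hat P^{(vis)}_\sigma|\otimes k\rangle\langle\Psi^{(l)}|\hat P^{(l)}_\sigma|\Psi^{(l)}\rangle$. Only $\sigma\in\mathcal Y_{\bm n}$ survive, and each contributes $\vare(\sigma)^2=1$, giving $|\mathcal Y_{\bm n}|/\bm n!=1$.

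Next, expand $|\Psi\rangle\langle\Psi|=\frac{1}{n!\,\bm n!}\sum_{\sigma,\pi}\vare(\sigma)\vare(\pi)\hat P_\sigma(|K\rangle\langle K|\otimes\hat\varrho^{(l)})\hat P_\pi^\dag$, where $|K\rangle=\otimes_\alpha|k_\alpha\rangle$. Take the partial trace over labels, using $\mathrm{Tr}(\hat P^{(l)}_\sigma\hat\varrho^{(l)}\hat P^{(l)\dag}_\pi)=\mathrm{Tr}(\hat P_{\pi^{-1}\sigma}\hat\varrho^{(l)})=J_{\hat\varrho}(\pi^{-1}\sigma)$. This yields Eq.~(\ref{vis_state}) directly, with $\vare(\sigma)\vare(\pi)=\vare(\pi\sigma)$. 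The properties of $J$ follow at once: $J(e)=1$ by normalization, Eq.~(\ref{JsymY}) from Eq.~(\ref{ED8}), and positive semidefiniteness from $\sum Z^*_\pi J(\pi^{-1}\sigma)Z_\sigma=\|\sum_\sigma Z_\sigma\hat P_\sigma|\Psi^{(l)}\rangle\|^2\ge0$.

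For the converse, I start from any positive semidefinite $J$ satisfying Eq.~(\ref{JsymY}) with $J(e)=1$ and realize it as $J(\sigma)=\langle\Psi^{(l)}|\hat P_\sigma|\Psi^{(l)}\rangle$ for a label vector of the form Eq.~(\ref{Psi_nlab}). The tool is a GNS/group-algebra construction. Take $\mathcal H_{(int)}$ with basis $|j\rangle$, $j=1,\dots,n$, and set $|\Psi^{(l)}\rangle=\sum_{\tau}c_\tau\hat P_\tau|1,2,\dots,n\rangle$. The vectors $\hat P_\tau|1,\dots,n\rangle$ are orthonormal, so $\langle\Psi^{(l)}|\hat P_\sigma|\Psi^{(l)}\rangle=\sum_\tau c^*_{\sigma\tau}c_\tau$, which is the convolution $(c^\ast * c)(\sigma)$ on $S_n$. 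Writing the positive function as $J=c^\ast*c$ requires a square root of the positive element $\sum_\sigma J(\sigma)\sigma$ in the group $C^*$-algebra $\mathbb C[S_n]$. Positive semidefiniteness of the matrix $J(\pi^{-1}\sigma)$ is exactly positivity of that element under the regular representation, so I take $c=\sqrt{J}$ via functional calculus of the matrix $M_{\pi\sigma}=J(\pi^{-1}\sigma)$. That matrix commutes with right translations, so its square root does too, and the root comes from a group-algebra element.

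The main obstacle is then enforcing Eq.~(\ref{ED8}) on the coefficients. Condition Eq.~(\ref{JsymY}) says $\sum_\sigma J(\sigma)\sigma$ is invariant, up to $\vare$, under multiplication by the idempotent $e_{\mathcal Y}=\frac{1}{\bm n!}\sum_{\pi\in\mathcal Y_{\bm n}}\vare(\pi)\pi$ on both sides. The positive square root inherits this, since it lies in the corner $e_{\mathcal Y}\mathbb C[S_n]e_{\mathcal Y}$ by functional calculus. This gives $c_{\tau\pi}=\vare(\pi)c_\tau$, which translates into Eq.~(\ref{ED8}) for $C_{j_1\dots j_n}$. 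Finally, insert $C$ into Eq.~(\ref{Psi_nSQ}) and apply the forward direction to recover the given $J$ in Eq.~(\ref{vis_state}). I expect the care to lie in checking the left/right conventions of $\hat P_\sigma$ from Eq.~(\ref{Pdef}) in this step.
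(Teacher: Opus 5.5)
\textbf{The converse is proved in the wrong direction.} Your forward derivation is correct and follows the paper's argument in Appendix~B: you pass to the first-quantized form of $|\Psi_{\bm n}\rangle$, check the normalization through $\mathcal{Y}_{\bm n}$, take the partial trace over labels to get $J_{\hat\varrho}(\pi^{-1}\sigma)$, and obtain positivity as $\|\sum_\sigma Z_\sigma\hat P_\sigma|\Psi^{(l)}_{\bm n}\rangle\|^2$.

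The theorem asserts that every visible state of bosons/fermions with occupation vector $\bm n$ can be written as Eq.~(\ref{vis_state}) for some admissible $J$, meaning $J$ is positive semidefinite, satisfies Eq.~(\ref{JsymY}), and has $J(e)=1$. You show instead that every admissible $J$ is realized by some pure $|\Psi_{\bm n}\rangle$. In other words, you show that every operator of the form Eq.~(\ref{vis_state}) is a visible state. That is the reverse inclusion, essentially a strengthened form of the paper's Proposition~1 with a pure label state obeying Eq.~(\ref{ED8}). It says nothing about visible states that are not reductions of a single pure $|\Psi_{\bm n}\rangle$. Examples are reductions of mixtures, or a general permutation-invariant density operator supported on $\mathrm{span}\{\hat P_\sigma|K\rangle\}$ with $|K\rangle=\bigotimes_\alpha|k_\alpha\rangle$, which is where the paper's proof starts.

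\textbf{Two ways to close the gap.} Any state with occupation vector $\bm n$ is a mixture $\hat\varrho=\sum_i p_i|\Psi^{(i)}_{\bm n}\rangle\langle\Psi^{(i)}_{\bm n}|$ of pure states of the form Eq.~(\ref{Psi_nSQ}). Your forward computation puts each term in the form Eq.~(\ref{vis_state}) with some admissible $J_i$. Eq.~(\ref{vis_state}) is linear in $J$, and the admissibility conditions are convex, so $J=\sum_i p_iJ_i$ works.

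Alternatively, follow the paper and start from an arbitrary visible state:
\begin{itemize}
\item Write $\hat\varrho_{(vis)}=\frac{1}{n!}\sum_{\sigma,\pi}M_{\sigma,\pi}\hat P_\sigma\bigl[\frac{1}{\bm n!}|K\rangle\langle K|\bigr]\hat P^\dagger_\pi$ with $M\ge 0$.
\item Average over $\hat P_\tau(\cdot)\hat P^\dagger_\tau$, so that $M_{\sigma,\pi}=F(\pi^{-1}\sigma)$ with $F$ positive semidefinite.
\item Average over $\sigma\to\sigma\tau$ and $\pi\to\pi\tau'$ with $\tau,\tau'\in\mathcal{Y}_{\bm n}$. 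This is allowed because $\hat P_\tau|K\rangle=|K\rangle$.
\item Set $J=\varepsilon F$ and read off $J(e)=1$ from the trace.
\end{itemize}

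\textbf{A convention fix in your realization argument.} With $c=\sqrt{\sum_\sigma J(\sigma)\sigma}$ you get $\sum_\tau c^*_{\sigma\tau}c_\tau=\overline{J(\sigma)}=J(\sigma^{-1})$, so you should use the conjugate coefficients $c^*_\tau$. With that change the construction is sound, including the corner-algebra argument for Eq.~(\ref{ED8}). It is a valid supplement to the theorem, not a proof of its converse.
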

Thus the     indistinguishability function defines  coherences of all states   $\hat{U}\hat{\varrho}_{(vis)}\hat{U}^\dag$ with $\hat{\varrho}_{(vis)}$ of Eq.~(\ref{vis_state}).
A uniform expression for $\hat{\varrho}_{(vis)}$ of Eq.~(\ref{vis_state})   is obtained with  the  positive semidefinite function   $\Lambda(\sigma) \equiv \vare(\sigma)J(\sigma)$.

    From Eqs.~(\ref{Dint}), (\ref{emD}) and (\ref{vis_state}), using $\hat{P}_\sigma\hat{S}^{(\pm)} = \vare(\sigma) \hat{S}^{(\pm)}$,  $\langle v_l|v_k\rangle=\delta_{l,k}$ and   $ \mathrm{sgn}(\sigma^{-1}) = \mathrm{sgn}(\sigma)$,  one readily obtains
\be
  \mathcal{D}   = \frac{1}{n!}\sum_{\sigma} J_{\hat{\varrho}}(\sigma), \quad  \widetilde{\mathcal{D}}  = \frac{1}{n!}\sum_{\sigma}  \mathrm{sgn}(\sigma)J_{\hat{\varrho}}(\sigma).
\en{DJ}
  
\paragraph{Maximally distinguishable identical particles.}
The notion of maximal distinguishability generalizes the vanishing-overlap condition of the Hong--Ou--Mandel effect \cite{HOM}. It corresponds to the trivial indistinguishability function \cite{Shch2015}
$J_{\hat{\varrho}}(\sigma)=\delta_{\sigma,e}$,
for which
$\mathcal{D}=\widetilde{\mathcal{D}}=1/n!$.
For occupation vectors $\bm n$ containing multiply occupied modes, the Young-subgroup symmetry Eq.~(\ref{JsymY}) implies  off-diagonal  coherences in the visible state Eq.~(\ref{vis_state}). Consequently, the  maximal distinguishability can apply only to particles occupying different orthogonal modes \cite{Shch2015}, namely, when $J(\sigma)=\vare(\sigma)$ for $\sigma\in\mathcal{Y}_{\bm n}$ and $J(\sigma)=0$ otherwise (cf. Ref.~\cite{DistNew}). More generally, any nontrivial indistinguishability function $J_{\hat{\varrho}}(\sigma)$ yields a visible state $\hat{\varrho}_{(vis)}$, Eq.~(\ref{vis_state}), containing off-diagonal coherences, indicating only  on  partial distinguishability.
 \paragraph{Indistinguishability and convex mixtures.}
 In general, a convex mixture of states  does not admit a single  indistinguishability function, reflecting the fundamentally interference-based nature of indistinguishability. Consequently, beyond the projective measures, there is generally no meaningful notion of indistinguishability for convex mixtures of states occupying mutually non-orthogonal visible modes (cf. Ref.~\cite{DistNew}; see Appendix B).
 
 \paragraph{Characterization of class-indistinguishability.}
 Recall that the irreducible characters $\chi_\lambda(\sigma)$ of the symmetric group $S_n$,  associated with integer partitions (Young diagrams) $\lambda=(\lambda_1,\ldots,\lambda_r)$,  $\lambda_1+\ldots +\lambda_r=n$,  form a real-valued orthonormal basis  with respect to    $\left(f,g\right) = 1/n!\sum_\sigma f^*(\sigma)g(\sigma)$ for  class-functions on $S_n$, i.e., satisfying   $F(\tau^{-1}\sigma\tau)=F(\sigma)$ for all $\sigma,\tau\in S_n$. As they are themselves  positive semidefinite  functions, normalized  irreducible characters $\chi_\lambda(\sigma)/\chi_\lambda(e)$ are valid indistinguishability functions.  
  Bosons and fermions correspond to the one-dimensional  characters  $\chi_{(n)}(\sigma)\equiv 1$ and  $\chi_{(1,\ldots,1)}(\sigma)= \mathrm{sgn}(\sigma)$.   The projective measures of  Eq. (\ref{DJ}) are   the projections  of  $J(\sigma)$ on these two  characters. 
 
 For   distinct visible modes, if  the indistinguishability function $J(\sigma)$ is  a   class-function,   we can     expand 
\be
J(\sigma) = \sum_{\lambda\vdash n}\mathcal{D}^{(\lambda)}  \frac{\chi_\lambda(\sigma)}{\chi_\lambda(e)},
\en{Jnew}
where  $\mathcal{D}^{(\lambda)} \ge 0$, $\mathcal{D}^{((n))}=\mathcal{D}$ and  $\mathcal{D}^{((1,\ldots,1))}=\widetilde{\mathcal{D}}$,  are  such that    $\sum_\lambda \mathcal{D}^{(\lambda)} =1$.  By orthogonality of $\chi_\lambda$, we get  from Eq.~(\ref{Jnew}): \mbox{$\mathcal{D}^{(\lambda)}=\frac{\chi_\lambda(e)}{n!}\sum_\sigma \chi_\lambda(\sigma)J(\sigma)$.}  
In particular, for $n\ge 3$ Eq.~(\ref{Jnew})  results in    non-diagonal visible states  with $\mathcal{D} =1/n!$ and  $J(\sigma)\ne \delta_{\sigma,e}$ (there are  free parameters  $\mathcal{D}^{(\lambda)} $), thereby   resolving a problem posed in   Ref.~\cite{DistNew} (an explicit  example for $n=3$  is in Appendix C).

Indistinguishability function   $J(\sigma)$ is  a class function  whenever  $\hat{P}_\sigma\hat{\varrho}^{(l)}\hat{P}^\dag_\sigma=\hat{\varrho}^{(l)}$ for any permutation $\sigma$, thus coinciding with $\hat{\varrho}_{(int)}$ (see  Eq.~(\ref{symrhoint})).  This includes  identical particles, in orthogonal visible modes, prepared in the same mixed (label) state   $\hat{\rho}_1$, for which   $\hat{\varrho}^{(l)} = \hat{\rho}_1^{\otimes n}$,  realized, for example,   by single photons sequentially emitted from a stable source \cite{QDs,QDsHOM,QDsHOM2,R2Ds}.

For any state $\hat{\varrho}$ of identical particles, the  analog of  Theorem 1 for  a partition $\lambda$  reads  (see Appendix C)   
\be	
\mathcal D^{(\lambda)} =  \mathrm{Tr}\!\left(\hat{S}^{(\lambda)}\hat{\varrho}_{(int)}\right)= \mathrm{Tr}\!\left(\hat{S}^{(\lambda^\prime)}\hat{\varrho}_{(vis)}\right) ,
\en{D_lambda}
where $\hat{S}^{(\lambda)}$  is the  projector onto the  $\lambda$-symmetry sector, $\lambda^\prime = \lambda$ for bosons, while for fermions $\lambda^\prime=\lambda^T$, i.e., the transposed partition (for  the transposed Young diagram). 
The set    $\{\mathcal{D}^{(\lambda^\prime)}\}$ has physical meaning of  the  probability distribution  on   the  symmetry spectrum  $\{\hat{\varrho}_{(vis)}^{(\lambda)}\}$. We have 
\be
 \hat{\varrho} _{(vis)} =  \bigoplus_{\lambda\vdash n} \hat{S}^{(\lambda)} \hat{\varrho}_{(vis)}\hat{S}^{(\lambda)} :=\bigoplus_{\lambda\vdash n}  \mathcal{D}^{(\lambda^\prime)} \hat{\varrho}^{(\lambda)} _{(vis)} . 
\en{vis_st_lambda}

\paragraph{Particle labels  and  multiparticle  indistinguishability.}
For two identical particles, indistinguishability decreases when the overlap of their label states decreases. For $n>2$, however, the opposite can occur: introducing additional labels can increase the multiparticle indistinguishability  measure $\mathcal{D}$.  Consider  a  concrete example  with $n=7$ photons. For a pure product label state 
 $\hat{\varrho}^{(l)}=\bigotimes_{j=1}^n |\psi_j\rangle\langle\psi_j|$, the indistinguishability measure, Eqs.~(\ref{Dint}), (\ref{distJ}) and (\ref{DJ}), reads 
\be
\mathcal{D}_G:=\frac{\operatorname{per}G}{n!}, \quad
G_{ij}=\langle\psi_i|\psi_j\rangle.
\en{DG}
 We consider internal states of the form $|\psi_j\rangle=|\phi_j\rangle\otimes |f_j\rangle$, where $|\phi_j\rangle$ and $|f_j\rangle$ denote the polarization state and temporal mode of a photon, respectively. The corresponding Gram matrix is the Hadamard product
\begin{equation}
\label{defG}
G=A\circ B,
\qquad
A_{ij}=\langle\phi_i|\phi_j\rangle,
\qquad
B_{ij}=\langle f_i|f_j\rangle .
\end{equation}
Choose two independent modes  $|h_1\rangle$ and $|h_2\rangle$ with   overlap $\rho\equiv \langle h_1|h_2\rangle\ge 0$,   and consider $n=7$ photons with temporal states
 \begin{align}
\label{E_fk}
|f_1\rangle =|h_1\rangle,\; |f_2\rangle=|h_2\rangle,\;
|f_k\rangle =\frac{|h_1\rangle+e^{-i\theta_k}|h_2\rangle}{\sqrt{2+2\rho\cos\theta_k}},
\end{align}
 where $3\le k\le 7$,  with  the following phases 
\begin{equation}
\label{E8}
(\theta_3,\theta_4,\theta_5,\theta_6,\theta_7)
=\left(\frac{4\pi}{5},\frac{2\pi}{5},0,-\frac{2\pi}{5},-\frac{4\pi}{5}\right).
\end{equation}
We  compare the ($A\circ B$)-case of   photons in  temporal modes of  Eq.~(\ref{E_fk}) with  polarization states
\begin{eqnarray}
&& |\phi_1\rangle = \left(\begin{array}{c}1 \\ 0\end{array} \right),
\quad
|\phi_2\rangle =\left(\begin{array}{c}0 \\ 1\end{array} \right),
\quad  |\phi_k\rangle =\frac{1}{\sqrt2}\left(\begin{array}{c}1 \\ e^{i\theta_k}\end{array} \right),\nonumber\\
\label{E7}
\end{eqnarray}
 $k=3,\ldots,7$, with the  ($B$)-case of photons in  the same temporal modes with identical polarization states ($A_{kl}=1$).  The   polarization  Gram matrix $A$   satisfies  the  inequality  $ \operatorname{per}(A\circ A^T)=\operatorname{per}\left(|A|^2 \right)>\operatorname{per}A$ \cite{Drury2016}, 
where $|A|^2$ denotes the matrix whose elements are the squared moduli of the elements of $A$. For $\rho=0$, we have $B=A^T$, and therefore $A\circ B=|A|^2$.   The  corresponding indistinguishability measures $\mathcal{D}_{A\circ B}$ and $\mathcal{D}_B$  versus  the   overlap  $\rho$ are shown in Fig.~\ref{fig1}  (see Appendix D for details).
\begin{figure}[t]
\begin{subfigure}[b]{0.485\columnwidth}
\includegraphics[width=\linewidth]{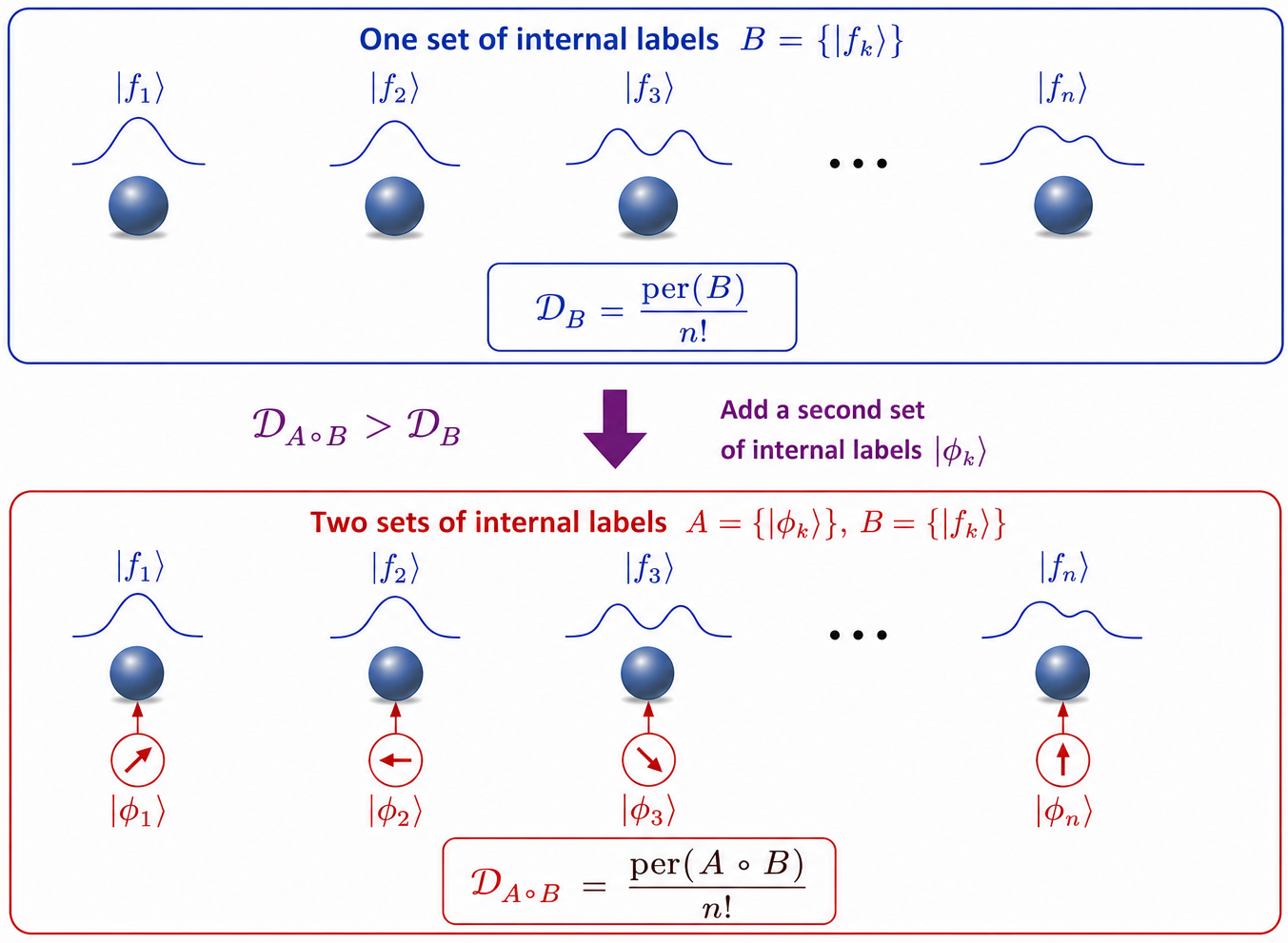}
\label{fig1(a)}
\end{subfigure}
\hfill
\begin{subfigure}[b]{0.485\columnwidth}
\includegraphics[width=0.995\linewidth]{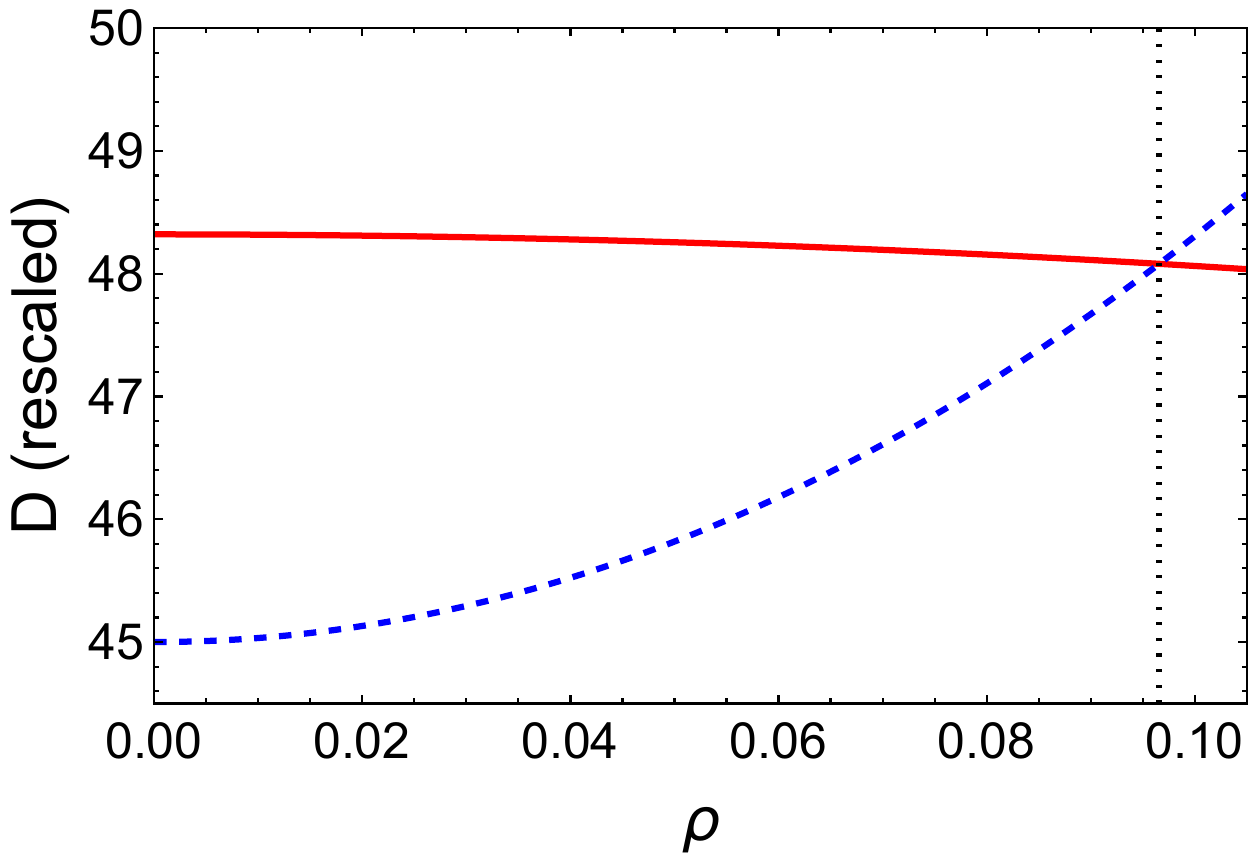}
\label{fig1(b)}
\end{subfigure}
\caption{(Left) Schematic representation of two cases with one ($B$) and two ($A$ and $B$) sets of label states. \\ (Right) We show  $n!\mathcal{D}$ for identical polarization states, $\operatorname{per}(B(\rho))$ (dashed line), and for different  polarization states, $\operatorname{per}(A\circ B(\rho))$ (solid line). The vertical dotted  line marks   the critical   value $\rho_* \approx 0.0966$.}
\label{fig1}
\end{figure} 

\paragraph{The indistinguishability boost  and collective phases.}
 Every permutation in Eq.~(\ref{DG}) factorizes into a product of independent cycles,
$\sigma=\nu_1\nu_2\cdots \nu_s$, i.e.,  cyclic permutations
 $\nu:=i_1\rightarrow i_2\rightarrow\cdots\rightarrow i_{|\nu|}\rightarrow i_1$, with $ |\nu|$ being the cycle length. 
The corresponding cycle weight 
$w(\nu)=G_{i_1i_2}G_{i_2i_3}\cdots G_{i_ri_1}$
carries  an $U(1)$-invariant  collective  phase  \cite{nphPhases}
\be
\Phi_\nu=
\phi_{i_1i_2}+\phi_{i_2i_3}+\cdots+\phi_{i_ri_1}
=
\arg\!\left(
G_{i_1i_2}G_{i_2i_3}\cdots G_{i_ri_1}
\right),
\en{cyclePh}
where $\phi_{ij}=\arg(G_{ij})$ is the state overlap phase.
Since $\Phi_{\nu^{-1}}=-\Phi_\nu$,
the combined contribution of a cycle and its inverse carries  a $\mathbb Z_2$ phase.
Grouping permutations according to the  partition type $\lambda$ and factoring out the corresponding signs
$s^\lambda_G\in\{\pm1\}$, with the initial values set at  $\rho=0$,
we can rewrite Eq.~(\ref{DG}) as
\be
\mathcal{D}(G(\rho))= \frac{1}{n!}
\sum_{\lambda\vdash n}
s^\lambda_{G}(\rho)\,\left|\mathcal{D}^\lambda(G(\rho))\right|.
\en{perG}
 The   seven-photon example above  realizes  sign flips: $s^\lambda_B<0\rightarrow s^\lambda_{A\circ B}>0$, thereby increasing the total indistinguishability.  The  cycle types $\lambda\in\{(7),(6,1),(5,2),(5,1,1)\}$  have negative signs at $\rho=0$,  the   polarization states  flip these signs (see Appendix D).

 As   $\rho=\langle h_1|h_2\rangle$  grows,  the   overlap phases of the temporal states gradually unwind.  The   magnitudes $|\mathcal{D}^\lambda(B(\rho))|$ evolve: the contributions of positive cycle types increase, while the contributions  from the  negative  cycles decrease. This eventually compensates the phase-flip enhancement in 
 $\mathcal{D}(A\circ B(\rho))$ (recall that $A = B^T(0)$) at $\rho=\rho_*$.   
\paragraph{Model of photons in quantum technologies.}
Many quantum technologies based on single photons \cite{AA,LOC,RevLOC} require  photons that are   nearly ideally indistinguishable 
$1-\mathcal{D}(\hat{\varrho})\ll1$. A broadly applicable model  for a variety of sources   \cite{QDs,QDsHOM,QDsHOM2,R2Ds} is of   photons  produced by a stable source, one   at a time. In this case  the photons are in  the same mixed temporal  (label) state $\hat{\varrho}_1\in \mathcal{H}_{(int)}$.  Since $\hat{\varrho}^{(l)}  = \hat{\varrho}_1^{\otimes n}$, all cycle weights   $w(\nu) = \mathrm{Tr}(\hat{\varrho}_1^{|\nu|})$   \cite{Shch2015,nphPhases} are positive, and  the phase-cancellation mechanism responsible for the indistinguishability boost is absent. Consequently, introducing additional labels with any correlation matrix $A$, i.e.,
$\hat{\varrho}_1^{\otimes n}\to A\circ\hat{\varrho}_1^{\otimes n}$,
cannot increase the indistinguishability measure. This also  follows from the inequality
$\operatorname{per}(A[I_\alpha])\le |I_\alpha|!$
for the permanent of any principal submatrix of $A$,   implying  
$\mathcal{D}_{A\circ \hat{\varrho}_1^{\otimes n}} \le \mathcal{D}_{\hat{\varrho}_1^{\otimes n}}$. The same conclusion is obtained for nearly indistinguishable photons in different label states   (see   Appendix E).      

\paragraph{Conclusion.}
We have shown that the projective indistinguishability measure admits a direct operational interpretation through the trace distance to the ideal indistinguishable-particle case, is determined by the particle label states, and remains invariant under arbitrary particle-number-preserving dynamics. We have generalized the textbook symmetrization/anti-symmetrization framework for identical bosons and fermions by deriving the visible state corresponding to an arbitrary indistinguishability function. We have identified the complete set of symmetry-sector measures characterizing indistinguishability whenever the indistinguishability function is a class function, and resolved a recently posed problem by constructing nontrivial states with the same projective indistinguishability measure as maximally distinguishable particles. Finally, we discovered a counterintuitive effect: additional label states can increase multiparticle indistinguishability. We demonstrated it for seven identical bosons and fermions, traced it to the cancellation of collective multiparticle phases, and proved that it cannot occur for nearly indistinguishable particles or particles prepared in identical mixed label states, including those emitted sequentially by a stable source.

 \paragraph{Acknowledgements.}  This work has been financially suported by  the National Council for Scientific and Technological Development (Conselho Nacional de Desenvolvimento Cient\'ifico e Tecnol\'ogico) of Brazil,  grant number 307507/2023-8.       ChatGPT and Wolfram Mathematica have been  used for  heavy computations.

 \newpage  
\onecolumngrid

\section{Appendix A: Indistinguishability measure  }

To prove Theorem~1, the key step is to decompose the difference $\hat{\varrho}_{(vis)}^{(i)} - \hat{\varrho}_{(vis)}$ into its positive semidefinite and negative semidefinite parts:
\begin{eqnarray}
\label{Delt_rhoA}
\hat{\varrho}_{(vis)}^{(i)} - \hat{\varrho}_{(vis)} =\left( \hat{\varrho}_{(vis)}^{(i)} - \hat{S}^{(\pm)}\hat{\varrho}_{(vis)}\hat{S}^{(\pm)} \right)  
-\left( \hat{\varrho}_{(vis)} - \hat{S}^{(\pm)}\hat{\varrho}_{(vis)}\hat{S}^{(\pm)} \right).
\end{eqnarray}
By a standard property of the trace distance  \cite{BookNC},
\[
d(\hat{\varrho}_{(vis)},\hat{\varrho}_{(vis)}^{(i)}) = \mathrm{Tr}\left( \hat{\varrho}_{(vis)}^{(i)} - \hat{S}^{(\pm)}\hat{\varrho}_{(vis)}\hat{S}^{(\pm)}\right).
\]
Therefore
\begin{eqnarray}
\label{Trd_indA}
  \mathcal{D} (\hat{\varrho}):=1- d(\hat{\varrho}_{(vis)},\hat{\varrho}_{(vis)}^{(i)}) 
  = 1- \mathrm{Tr}\left( \hat{\varrho}_{(vis)}^{(i)} - \hat{S}^{(\pm)}\hat{\varrho}_{(vis)}\hat{S}^{(\pm)}\right)
=   \mathrm{Tr}\left( \hat{S}^{(\pm)}\hat{\varrho}_{(vis)} \right).\nonumber
\end{eqnarray}
The multiplicative property $\vare(\sigma)\vare(\pi) = \vare(\sigma\pi)$ implies the identity \cite{BSF}
\be
\left(\hat{S}^{(\pm)}_{(vis)} \otimes \hat{I}\right) \hat{S}^{(\pm)} =  \left(\hat{I}\otimes \hat{S}^{(+)}_{(int)}\right)\hat{S}^{(\pm)} =\hat{S}^{(\pm)}_{(vis)} \otimes \hat{S}^{(+)}_{(int)}.
\en{id_SA}
Since  $\hat{S}^{(\pm)}\hat{\varrho} = \hat{\varrho}$, Eq.~(\ref{id_SA}) yields 
 \[
 \left(\hat{S}^{(\pm)}_{(vis)} \otimes\hat{I}\right)\hat{\varrho} =\left(\hat{I}\otimes\hat{S}^{(+)}_{(int)} \right)\hat{\varrho}.
 \]
 Taking the partial traces over the visible and internal subspaces  gives 
\be
\mathcal{D} (\hat{\varrho}) = \mathrm{Tr}\left( \hat{S}^{(\pm)}\hat{\varrho}_{(vis)} \right) =\mathrm{Tr}\left( \hat{S}^{(+)}\hat{\varrho}_{(int)}\right),
\en{DintA}
which is   Theorem~1.  The Corollary follows immediately by replacing $\hat{S}^{(\pm)}\to \hat{S}^{(\mp)}$ throughout the above argument.   Q.E.D.

\section{Appendix B:   Proof of Theorem~2 and  related  results}

\subsection{I. Proof of the direct result in Theorem~2 for  the simplest case}

We first consider the simpler case of $n$ identical particles, originated from independent sources, and  occupying mutually orthogonal visible modes $|k\rangle\in \mathcal{H}_{(vis)}$, $k=1,\ldots, n$.    Their label states are denoted by $\hat{\varrho}^{(l)}_k$ (in quantum optics, the label state is usually referred to simply as the ``state’’ of a photon). Using a basis $|j\rangle$, $j=1,2,3,\ldots$, of the internal Hilbert space $\mathcal{H}_{(int)}$ we   expand  
\be
\hat{\varrho}^{(l)}_k = \sum_{j,l}\varrho_{k;jl}|j\rangle\langle l|.
\en{ED1} 
Introducing  creation operators  $\hat{a}^\dag_{k,j}$ and annihilation operators  $\hat{a}_{k,j}$  for  the single-particle state  $|k\rangle\otimes |j\rangle$, 
the many-particle state in the second-quantization representation becomes
\be
\hat{\varrho} = \sum_{j_1,\ldots,j_n}\sum_{l_1,\ldots,l_n} \left( \prod_{k=1}^n\varrho_{k;j_k,l_k}\right) \prod_{k=1}^n \hat{a}^\dag_{k,j_k}|0\rangle\langle 0| 
\prod_{k=1}^N \hat{a}_{k,l_k}.
\en{ED2}
Conversely, any state of the form  Eq.~(\ref{ED2}) uniquely determines the particle-label states $\hat{\varrho}^{(l)}_k$.
Using the relation between the second- and first-quantization representations,
\be
  \prod_{k=1}^n\hat{a}^\dag_{\phi_k}|0\rangle = \sqrt{n!} \hat{S}^{(\pm)} \bigotimes_{k=1}^n |\phi_k\rangle,
\en{ED3}
valid for arbitrary set of single-particle states  $|\phi_k\rangle\in \mathcal{H}_{(vis)}\otimes \mathcal{H}_{(int)}$  (for a   proof, see, e.g., the lecture notes arXiv:1308.3275),  Eq.~(\ref{ED2})  becomes 
\be
\hat{\varrho}  = n! \sum_{j_1,\ldots,j_n}\sum_{l_1,\ldots,l_n} \left( \prod_{k=1}^n\varrho_{k;j_k,l_k}\right) \hat{S}^{(\pm)} \left\{\bigotimes_{k=1}^n |k\rangle\langle k| \otimes |j_k\rangle\langle l_k|\right\} \hat{S}^{(\pm)} =n! \hat{S}^{(\pm)} \left\{\bigotimes_{k=1}^n |k\rangle\langle k|\otimes \hat{\varrho}^{(l)}_k\right\} \hat{S}^{(\pm)}.
\en{ED4}
Hence, in the first-quantization representation, the corresponding   seed state of non-identical particles and the associated label state are
 \be
\hat{\varrho}^{(d)} \equiv  \bigotimes_{k=1}^n |k\rangle\langle k|\otimes      \hat{\varrho}^{(l)}_k= \left( \bigotimes_{k=1}^n |k\rangle\langle k|\right)\otimes \hat{\varrho}^{(l)}, \quad \hat{\varrho}^{(l)}\equiv \bigotimes_{k=1}^n\hat{\varrho}^{(l)}_k.
\en{varrho^d}
	
We now derive the corresponding visible state. Tracing over  the internal degrees of freedom in  Eq.~(\ref{ED4}),  expanding the projectors $\hat{S}^{(\pm)}$,  and using the group identities  $\hat{P}^\dag_\pi = \hat{P}_{\pi^{-1}}$ and $\hat{P}_{\pi^{-1}}\hat{P}_\sigma = \hat{P}_{\pi^{-1}\sigma}$, we obtain 
\begin{eqnarray}
\hat{\varrho}_{(vis)}=\mathrm{Tr}_{(int)}\hat{\varrho} &= & \frac{1}{n!} \sum_{\sigma,\pi} \vare(\pi^{-1}\sigma) \mathrm{Tr}
\left(\hat{P}^\dag_\pi\hat{P}_\sigma   \hat{\varrho}^{(l)}  \right) \hat{P}_\sigma  \left(\bigotimes_{k=1}^n |k\rangle\langle k| \right) \hat{P}^\dag_\pi\nonumber\\
&=& \frac{1}{n!} \sum_{\sigma,\pi} \vare(\pi^{-1}\sigma)  J_{\hat{\varrho}}(\pi^{-1}\sigma) \hat{P}_\sigma \left(\bigotimes_{k=1}^n |k\rangle\langle k|\right)\hat{P}^\dag_\pi,
\label{ED5}
\end{eqnarray}
where the indistinguishability function is
\be
J_{\hat{\varrho}}(\sigma) :=    \mathrm{Tr}\left(\hat{P}_\sigma \hat{\varrho}^{(l)} \right)=\mathrm{Tr}\left(\hat{P}_\sigma \bigotimes_{k=1}^n\hat{\varrho}^{(l)}_k \right). 
\en{ED6}
  Finally, taking the partial trace of Eq.~(\ref{ED4}) over the visible degrees of freedom and using the orthogonality of the visible modes yields the relation between the label state and the internal state,
\be
\hat{\varrho}_{(int)}=\mathrm{Tr}_{(vis)}\hat{\varrho} =\frac{1}{n!}\sum_\sigma \hat{P}_\sigma \hat{\varrho}^{(l)}\hat{P}^\dag_\sigma.
\en{REL1}


\subsection{II. Proof of the direct result in Theorem~2 for the general case}
 
Utilizing Eq. (\ref{ED3}) in the same  we obtain the visible state corresponding to $|\Psi_{\bm{n}}\rangle$ in Eq. (\ref{Psi_nSQ}) of the main text:
\begin{eqnarray}
\label{ED10}
\hat{\varrho}_{(vis)} = \mathrm{Tr}_{(int)}\{ |\Psi_{\bm{n}}\rangle\langle \Psi_{\bm{n}}|\} = \frac{1}{n!} \sum_{\sigma,\pi} \vare(\pi^{-1}\sigma)  J_{\hat{\varrho}}(\pi^{-1}\sigma) \hat{P}_\sigma \left(\frac{1}{\bm{n}!}\bigotimes_{\alpha=1}^n |{k_\alpha}\rangle\langle {k_\alpha}|\right)\hat{P}^\dag_\pi 
\end{eqnarray} 
where the indistinguishability function of the state  $|\Psi_{\bm{n}}\rangle$ is 
\be
 J_{\hat{\varrho}}(\sigma) := \left[ \sum_{l_1,\ldots,l_n}C^*_{l_1,\ldots,l_n} \bigotimes_{\alpha=1}^n \langle l_\alpha| \right]   \hat{P}_\sigma \left[ \sum_{j_1,\ldots,j_n}C_{j_1,\ldots,j_n}\bigotimes_{\alpha=1}^n |j_\alpha\rangle\right] = \mathrm{Tr}\{ \hat{P}_\sigma \hat{\varrho}^{(l)}\},
\en{ED11}
with the corresponding label state 
\be
\hat{\varrho}^{(l)} := |\Psi^{(l)}_{\bm{n}}\rangle \langle \Psi^{(l)}_{\bm{n}}|, \quad |\Psi^{(l)}_{\bm{n}}\rangle:= \sum_{j_1,\ldots,j_n}C_{j_1,\ldots,j_n}\bigotimes_{\alpha=1}^n |j_\alpha\rangle.
\en{ED12}
 
 The Young-subgroup symmetry induced by the occupation vector $\bm{n}$, Eq.~(\ref{ED8}) of the main text, is inherited by the label state and, consequently, by the indistinguishability function. In particular, 
\be
\hat{P}_\pi |\Psi^{(l)}_{\bm{n}}\rangle =  \vare(\pi) |\Psi^{(l)}_{\bm{n}}\rangle.
\en{ED13}

Averaging in Eq.~(\ref{ED10}) over the visible degrees of freedom, by  taking  into account the Young subgroup symmetry Eq.~(\ref{JsymY})  of the main text  we prove  the   relation of 
Eq. (\ref{REL1}) for the state in Eq.~(\ref{Psi_nSQ}) of the main text. Q.E.D.

The label state in Eq.~(\ref{ED12}) accounts for arbitrary entanglement of the particles over the internal degrees of freedom and, in general, is therefore not factorized, unlike the simpler case of one particle per visible mode considered above. Such entanglement is, in fact,  appears  whenever multiple particles occupy the same visible mode, as it ensures the Young-subgroup symmetry of the label state required by Eq.~(\ref{ED13}).
\subsection{Discussion}
The above derivation  can be  extended to arbitrary states of identical particles. However, the resulting visible state generally cannot be described by a single indistinguishability function (if it is not an evolved state with a single indistinguishability function). In the simplest case of single particle,  this reflects the fact that the label states (e.g., the ``states’’ of photons in quantum optics) are only well defined relative to a set of mutually orthogonal visible modes, which serve as the  reference labels for identical  particles.

A convex mixture of any set of  states is, of course, still a valid quantum state. However, its components cannot interfere on a single multiport, and therefore no single indistinguishability function can be associated with the mixture. 

The  notion of indistinguishability function is therefore  restricted to states defined with respect to mutually orthogonal visible modes.

\subsection{Proof of the converse result in Theorem 2}
Consider an arbitrary visible state $\hat{\varrho}_{(vis)}$ of $n$ identical particles with the  occupation vector $\bm{n}=(n_1,\ldots, n_r)$ over mutually orthogonal visible modes $|1\rangle, \ldots |r\rangle$. 
Since the reduced state always satisfies the   minimal permutation symmetry 
\be
\hat{P}_\sigma \hat{\varrho}_{(vis)}\hat{P}^\dag_\sigma = \hat{\varrho}_{(vis)}, \quad \forall \sigma \in S_n,
\en{symMin}
it therefore admits the decomposition
\be
\hat{\varrho}_{(vis)} = \frac{1}{n!}\sum_{\sigma,\pi} M_{\sigma,\pi} \hat{P}_\sigma \Biggl(\frac{1}{\bm{n}!}\bigotimes\limits_{\alpha=1}^n |{k_\alpha}\rangle\langle {k_\alpha}|\Biggr)\hat{P}^\dag_\pi, 
\en{formSt}
where the matrix $M$ indexed by permutations is positive semidefinite  Hermitian matrix. Since we have 
\be
\hat{P}_\tau \bigotimes\limits_{\alpha=1}^n |{k_\alpha}\rangle = \bigotimes\limits_{\alpha=1}^n |{k_\alpha}\rangle, \quad \forall \tau \in \mathcal{Y}_{\bm{n}},
\en{symV}
 where $\mathcal{Y}_{\bm{n}}=S_{n_1}\otimes\ldots\otimes S_{n_r}$ is the Young subgroup induced by the occupation vector  $\bm{n}$, without changing the state, one may choose the coefficients to satisfy 
  \be
 M_{\sigma\tau,\pi} = M_{\sigma,\pi\tau} = M_{\sigma, \pi}, \quad \quad \forall \tau \in \mathcal{Y}_{\bm{n}}. 
 \en{symM}
 Using Eq.~(\ref{symM}), the normalization condition becomes
  \be
\frac{1}{n! \bm{n}!} \sum_{\sigma\in S_n}\sum_{\tau\in \mathcal{Y}_{\bm{n}}} M_{\sigma,\sigma\tau}=1. 
 \en{normM}
Now we  set
\be
J(\sigma) := M_{\sigma,e}. 
\en{JfromM}
The function defined by Eq.~(\ref{JfromM}) is positive semidefinite  on $S_n$, satisfying 
\be
 J(\sigma\tau)= J(\tau \sigma) = J(\sigma),\quad    \forall \tau \in \mathcal{Y}_{\bm{n}},
\en{symJM}
and  normalized by $J(e)=1$. We  have arrived at the form of the state in Theorem 2.  Q. E. D. 

The converse result in  Theorem~2 can be strengthened by the following. 
\begin{proposition}[Quantum state for positive semidefinite  function on $S_n$] 
Every   positive semidefinite  function $J(\sigma)$ on $S_n$, satisfying also $J(e)=1$, can be cast in the  form of the distinguishability function, i.e., 
\be
J(\sigma) = \mathrm{Tr}(\hat{P}_\sigma \hat{\varrho}^{(J)})
\en{Jform} 
for some  quantum state $\hat{\varrho}^{(J)}$. 
\end{proposition}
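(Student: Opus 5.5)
The plan is to realize the left regular representation of $S_n$ inside $\mathcal{H}_{(int)}^{\otimes n}$ and then to write the positive semidefinite function $J$ as a "vector state" $\langle\xi|\hat{P}_\tau|\xi\rangle$ of that representation. Positive semidefiniteness will supply a square root, and the square root gives the vector.

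\emph{Step 1 (embedding the regular representation).} Take $\mathcal{H}_{(int)}$ of dimension at least $n$ with orthonormal vectors $|1\rangle,\ldots,|n\rangle$. Define $|\sigma\rangle:=\hat{P}_\sigma\,|1\rangle\otimes\cdots\otimes|n\rangle$. By Eq.~(\ref{Pdef}), $|\sigma\rangle=|\sigma^{-1}(1)\rangle\otimes\cdots\otimes|\sigma^{-1}(n)\rangle$. Distinct $\sigma$ therefore give distinct basis tensors, so $\{|\sigma\rangle\}_{\sigma\in S_n}$ is orthonormal. Moreover $\hat{P}_\tau|\sigma\rangle=|\tau\sigma\rangle$. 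The span of these vectors is thus a copy of $\ell^2(S_n)$ carrying the left regular representation.

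\emph{Step 2 (square root).} Consider the $n!\times n!$ matrix $K_{\pi,\sigma}:=J(\pi^{-1}\sigma)$.
\begin{itemize}
\item By Eq.~(\ref{psdJ}), $K$ is positive semidefinite, hence Hermitian.
\item $K$ is invariant under simultaneous left translation: $K_{\tau\pi,\tau\sigma}=K_{\pi,\sigma}$. Equivalently, $K$ commutes with the permutation matrices $L_\tau$ of the left regular representation.
\item Let $S=K^{1/2}$ be the positive square root. It is a polynomial in $K$, so it also commutes with every $L_\tau$. Hence $S_{\pi,\sigma}=s(\pi^{-1}\sigma)$ for the single function $s(\sigma):=S_{e,\sigma}$.
\item Hermiticity of $S$ gives $s(\sigma^{-1})=s(\sigma)^*$.
\end{itemize}
The relation $S^2=K$ then reads
\[
J(\pi^{-1}\sigma)=\sum_{\rho} s(\pi^{-1}\rho)\,s(\rho^{-1}\sigma).
\]

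\emph{Step 3 (the state).} Set $|\xi\rangle:=\sum_\sigma c_\sigma|\sigma\rangle$, with coefficients built from $s$, and $\hat{\varrho}^{(J)}:=|\xi\rangle\langle\xi|$. Two candidate choices are $c_\sigma=s(\sigma)$ and $c_\sigma=s(\sigma^{-1})$.

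Using Step 1,
\[
\mathrm{Tr}(\hat{P}_\tau\hat{\varrho}^{(J)})=\sum_\sigma c^*_{\tau\sigma}c_\sigma .
\]
Take $c_\sigma=s(\sigma^{-1})=s(\sigma)^*$. Then $c^*_{\tau\sigma}=s(\tau\sigma)$ and $c_\sigma=s(\sigma^{-1})$, so the sum is $\sum_\sigma s(\tau\sigma)s(\sigma^{-1})$. Substituting $\rho=\tau\sigma$ turns this into $\sum_\rho s(\rho)\,s(\rho^{-1}\tau)$. This is exactly $(S^2)_{e,\tau}=K_{e,\tau}=J(\tau)$.

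Normalization follows in the same way:
\[
\langle\xi|\xi\rangle=\sum_\sigma|s(\sigma)|^2=(S^2)_{e,e}=J(e)=1 .
\]
Hence $\hat{\varrho}^{(J)}$ is a valid pure state and Eq.~(\ref{Jform}) holds.

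\emph{Expected obstacle.} There is no analytic difficulty. The only real risk is bookkeeping of conventions: $\hat{P}_\sigma$ versus $\hat{P}_{\sigma^{-1}}$, left versus right translation invariance of $K$, and which of $s(\sigma)$ or $s(\sigma^{-1})$ to use as coefficients. A wrong choice produces $J(\tau^{-1})=J(\tau)^*$ instead of $J(\tau)$. I would check the final formula on the transposition case $n=2$ before finalizing the choice of $c_\sigma$.

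Remark: if one wants $\hat{\varrho}^{(J)}$ to reproduce the Young-subgroup symmetry (\ref{JsymY}) as in Eq.~(\ref{ED13}), that symmetry is automatically inherited by $s$ through functional calculus. No extra argument is needed for the proposition as stated.
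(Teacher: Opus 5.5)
Your proof is correct, and it takes a genuinely different route from the paper's. The paper takes no square root. It uses the mixed state $\hat{\varrho}^{(J)}=\frac{1}{n!}\sum_{\tau,\pi}|\pi\rangle J(\pi^{-1}\tau)\langle\tau|$, so the normalized kernel $K/n!$ is itself the density matrix in the basis $\{|\sigma\rangle\}$. Positivity of the state is then literally the hypothesis on $J$, unit trace is $J(e)=1$, and the check is a single Kronecker-delta contraction. You instead write $J$ as a diagonal matrix coefficient $\langle\xi|\hat{P}_\tau|\xi\rangle$ of the regular representation, a GNS-type construction. This costs the functional-calculus step: $K^{1/2}$ inherits left-translation invariance and is therefore convolution by one function $s$. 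In return it delivers a \emph{pure} state, which is a strictly stronger conclusion and matches the pure label states $|\Psi^{(l)}_{\bm{n}}\rangle$ of Theorem~2. The two constructions are related: averaging your $|\xi\rangle\langle\xi|$ over right translations $|\sigma\rangle\mapsto|\sigma\alpha\rangle$ gives the mixed state with $\langle\mu|\hat{\varrho}|\nu\rangle=J(\nu\mu^{-1})/n!$. These translations are implemented by $V^{\otimes n}$ with $V$ permuting the $|j\rangle$, so they commute with every $\hat{P}_\tau$. This is also where your caution about conventions pays off. With the definition of $\hat{P}_\sigma$ and the rule $\hat{P}_{\pi^{-1}}\hat{P}_\sigma=\hat{P}_{\pi^{-1}\sigma}$ used in Appendix~B, one gets $\hat{P}_\sigma|\pi\rangle=|\sigma\pi\rangle$, as in your Step~1, not the $|\pi\sigma\rangle$ used in the paper's check. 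With $|\sigma\pi\rangle$, the paper's kernel $J(\pi^{-1}\tau)$ returns only the conjugation average $\frac{1}{n!}\sum_\pi J(\pi^{-1}\sigma\pi)$, which equals $J(\sigma)$ only for class functions. What is needed there is the right-invariant kernel $J(\tau\pi^{-1})$, which is equally positive semidefinite and is exactly the twirl of your state just described. Your argument, with the choice $c_\sigma=s(\sigma^{-1})$, is consistent throughout.
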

 \textit{Proof.--} Let us construct such a  quantum  state using  $J(\sigma)$. Consider a Hilbert space $\mathcal{H}$ of dimension $\mathrm{dim}(\mathcal{H}) \ge n$ and choose some  orthogonal states  $|j\rangle\in \mathcal{H}$, $j=1,\ldots, n$. Let 
\be
|\sigma\rangle:= \bigotimes_{j=1}^n |\sigma^{-1}(j)\rangle. 
\en{sigmaST}
Now,  we can  use  in Eq. (\ref{Jform}) the quantum   state  
\be
\hat{\varrho}^{(J)} = \frac{1}{n!}\sum_{\tau, \pi} |\pi\rangle J(\pi^{-1}\tau)\langle \tau|.
\en{mixStJ}
The  operator $\hat{\varrho}^{(J)}$ is positive semidefinite  because $J(\sigma)$ is positive semidefinite , and has unit trace since the states $|\sigma\rangle$ are orthonormal and $J(e)=1$.   Using that  
\[
\hat{P}_\sigma |\pi\rangle = \bigotimes_{j=1}^n |\sigma^{-1}(\pi^{-1}(j))\rangle = |\pi\sigma\rangle,
\]
we get
\[
\mathrm{Tr}(\hat{P}_\sigma \hat{\varrho}^{(J)}) = \frac{1}{n!}\sum_{\tau, \pi}   J(\pi^{-1}\tau)\delta_{\tau,\pi\sigma} = J(\sigma).
\]
Q.E.D.

\section{Appendix C: Generalized symmetries and the corresponding   measures}

The irreducible representations of the symmetric group $S_n$ describe the generalized symmetry sectors of $n$-particle visible states. They are in one-to-one correspondence with the partitions  $\lambda\vdash n$, conveniently represented by Young diagrams. The trace of elements in an  irreducible representation as function on $S_n$ is called  irreducible character. 
The irreducible characters form an orthonormal basis of the vector space of class functions on $S_n$, i.e., functions $f(\sigma)$ satisfying $f(\tau^{-1}\sigma\tau) = f(\sigma)$, for all  $\tau\in S_n$. Class functions have the same values on all permutations of the same cycle type $\lambda$, e.g., $\chi_\lambda(\sigma) = \chi_\lambda(\sigma^{-1})$. The generalized  orthogonality of property the irreducible characters follows from that of the irreducible representations and  is as follows (we also take into account that the irreducible characters are real-valued):
\be
\frac{1}{n!}\sum_\sigma \chi_\lambda(\sigma)\chi_\mu(\tau\sigma) = \frac{\delta_{\lambda,\mu}\chi_\lambda(\tau)}{\chi_\lambda(e)},
\en{chi_orth}
where $\lambda$ and $\mu$ are two partitions of $n$. Setting $\tau = e$ we get the mutual orthogonality property with respect to the inner product of two complex-valued class functions, $f(\sigma)$ and $g(\sigma)$ on $S_n$, defined as follows
\be
(f,g):= \frac{1}{n!}\sum_\sigma f^*(\sigma)g(\sigma). 
\en{Inn_prod}
The orthogonality property  Eq. (\ref{chi_orth}) allows us to  introduce projectors    
\be
\hat{S}^{(\lambda)} := \frac{\chi_\lambda(e)}{n!} \sum_\sigma \chi_\lambda(\sigma)\hat{P}_\sigma,
\en{S_lambda}
acting on the $n$th  tensor power of Hilbert space and satisfying 
\be
\quad \hat{S}^{(\lambda)} \hat{S}^{(\mu)}= \delta_{\lambda,\mu}\hat{S}^{(\lambda)}, \quad \sum_{\lambda\vdash n}\hat{S}^{(\lambda)} =\hat{I}.
\en{SlambSmu}
Except for the fully symmetric and fully antisymmetric sectors, the projectors  are multidimensional. 

The following identity generalizes Eq.~(\ref{id_SA}) from the bosonic and fermionic sectors to arbitrary irreducible symmetry sectors:
\be
 \left(\hat{I}\otimes \hat{S}^{(\lambda)}_{(int)}\right)\hat{S}^{(\pm)} = \left(\hat{S}^{(\lambda^\prime)}_{(vis)} \otimes \hat{I}\right) \hat{S}^{(\pm)}  ,\quad\lambda^\prime 
 =\left\{ \begin{array}{cc} \lambda, & \mathrm{bosons},\\  \lambda^T, & \mathrm{fermions},\end{array} \right.
\en{id_SpmSlambda}
where $\lambda^T$ is the transposed partition (i.e., the transposed Young diagram). Indeed, using the multiplicativity $\vare(\sigma)\vare(\pi) = \vare(\sigma\pi)$, together with the identities  $\mathrm{sgn}(\sigma)\chi_\lambda(\sigma) = \chi_{\lambda^T}(\sigma)$  and    $\chi_{\lambda^T}(e) = \chi_\lambda(e)$ one finds
\be
\left(\hat{I}\otimes \chi_\lambda(\sigma) \hat{P}_\sigma\right)\vare(\pi)\hat{P}_\pi =\left( \vare(\sigma)\chi_\lambda(\sigma) \hat{P}_{\sigma^{-1}} \otimes  \hat{I}\right) \vare(\sigma\pi)\hat{P}_{\sigma\pi} = \left(  \chi_{\lambda^\prime}(\sigma^{-1}) \hat{P}_{\sigma^{-1}}  \otimes\hat{I}\right) \vare(\sigma\pi)\hat{P}_{\sigma\pi}.
\en{id_PP}
Averaging both sides in  Eq. (\ref{id_PP})  over $\sigma$ and $\pi$ we get Eq. (\ref{id_SpmSlambda}).    

Since  $\hat{S}^{(\pm)}\hat{\varrho} = \hat{\varrho}$, Eq.~(\ref{id_SpmSlambda})  implies 
 \[
 \left(\hat{I}\otimes\hat{S}^{(\lambda)}_{(int)} \right)\hat{\varrho} = \left(\hat{S}^{(\lambda^\prime)}_{(vis)} \otimes\hat{I}\right)\hat{\varrho}.
 \]
  Using this relation and taking the partial traces over the visible and internal degrees of freedom, we obtain the result stated in the main text when we generalize Theorem 1 for arbitrary partition $\lambda$:
\be
\mathcal{D}^{(\lambda)} (\hat{\varrho}) := \mathrm{Tr}\left( \hat{S}^{(\lambda)}\hat{\varrho}_{(int)}\right)=\mathrm{Tr}\left( \hat{S}^{(\lambda^\prime)}\hat{\varrho}_{(vis)} \right).
\en{D_lam_intA}
Since $\hat\varrho_{(int)}$ is obtained by averaging the label state,   
\[
\hat{\varrho}_{(int)} = \frac{1}{n!}\sum_\tau\hat{P}_\tau  \hat{\varrho}^{(l)} \hat{P}^\dag_\tau,
\]
and $\chi_\lambda(\sigma)$ is a class function,  $\mathcal{D}^{(\lambda)} (\hat{\varrho}) $ of  Eq.~(\ref{D_lam_intA})  becomes 
\be
\mathcal{D}^{(\lambda)} (\hat{\varrho})=    \frac{\chi_\lambda(e)}{n!}\sum_\sigma \chi_\lambda(\sigma)\mathrm{Tr}\left(\hat{P}_\sigma\frac{1}{n!}\sum_\tau\hat{P}_\tau  \hat{\varrho}^{(l)} \hat{P}^\dag_\tau \right)
= \frac{\chi_\lambda(e)}{n!}\sum_\sigma \chi_\lambda(\sigma)J_{\hat{\varrho}}(\sigma) .
\en{D_lambdaJ}
 
 The above results can be summarized  as follows. 

\begin{proposition}[Symmetry-sector measures]
For every partition $\lambda\vdash n$, let 
\[
\hat {S}^{(\lambda)} = \frac{\chi_\lambda(e)}{n!} \sum_{\sigma\in S_n} \chi_\lambda(\sigma)\hat P_\sigma
\]
be the projector onto the irreducible symmetry sector labelled by $\lambda$. Then the corresponding projective measure
\[
\mathcal D^{(\lambda)}(\hat\varrho) := \mathrm{Tr}\!\left(\hat S^{(\lambda)}\hat\varrho_{(int)}\right)
= \mathrm{Tr}\!\left( \hat S^{(\lambda^\prime)}\hat\varrho_{(vis)} \right),
\]
where
$\lambda^\prime=\lambda$ for bosons and $\lambda^\prime=\lambda^T$ for fermions, is invariant under arbitrary particle-number-preserving evolutions of the visible degrees of freedom. Furthermore, if the indistinguishability function is a class function,
\[
J(\tau^{-1}\sigma\tau)=J(\sigma), \quad \forall\,\sigma,\tau\in S_n,
\]
then
\[
\mathcal D^{(\lambda)} = \frac{\chi_\lambda(e)}{n!} \sum_{\sigma\in S_n} \chi_\lambda(\sigma)J(\sigma), 
\]
and the collection $\{\mathcal D^{(\lambda)}\}$  forms a probability distribution over the irreducible representations of $S_n$,
\[
\sum_{\lambda\vdash n} \mathcal D^{(\lambda)}
=1,
\]
providing a complete characterization of the symmetry-sector content of the visible state.
\end{proposition}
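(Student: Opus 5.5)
The proof splits into four parts: the two expressions for $\mathcal D^{(\lambda)}$ agree, they are dynamically invariant, they take the character form for class-function $J$, and they sum to one.

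\emph{Equality of the two expressions.} I would reuse the argument of Appendix~A with $\hat S^{(\pm)}$ replaced by $\hat S^{(\lambda)}$, via the identity Eq.~(\ref{id_SpmSlambda}). To check it, split $\hat P_\sigma=\hat P^{(vis)}_\sigma\otimes\hat P^{(int)}_\sigma$. Use $\hat P_\sigma\hat S^{(\pm)}=\vare(\sigma)\hat S^{(\pm)}$ to trade $\hat I\otimes\hat P^{(int)}_\sigma$ for $\vare(\sigma)\,\hat P^{(vis)}_{\sigma^{-1}}\otimes\hat I$ acting on $\hat S^{(\pm)}$. Then apply $\vare(\sigma)\chi_\lambda(\sigma)=\chi_{\lambda'}(\sigma)=\chi_{\lambda'}(\sigma^{-1})$, where $\mathrm{sgn}\cdot\chi_\lambda=\chi_{\lambda^T}$ and characters of $S_n$ are real and inverse-invariant. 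Sum over $\sigma$, reindex $\sigma\to\sigma^{-1}$, and use $\chi_{\lambda^T}(e)=\chi_\lambda(e)$.

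Since $\hat S^{(\pm)}\hat\varrho=\hat\varrho$, this gives $(\hat I\otimes\hat S^{(\lambda)}_{(int)})\hat\varrho=(\hat S^{(\lambda')}_{(vis)}\otimes\hat I)\hat\varrho$. Taking the full trace and performing the partial traces yields $\mathrm{Tr}(\hat S^{(\lambda)}\hat\varrho_{(int)})=\mathrm{Tr}(\hat S^{(\lambda')}\hat\varrho_{(vis)})$.

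\emph{Invariance.} This follows from the internal-state expression: the evolution $\hat U\otimes\hat I$ leaves $\hat\varrho_{(int)}$ unchanged. Equivalently, in the visible picture, $\hat U$ commutes with every $\hat P_\sigma$, hence with $\hat S^{(\lambda')}$, so cyclicity of the trace gives the same value.

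\emph{Character formula.} I would insert Eq.~(\ref{symrhoint}) or (\ref{REL1}), $\hat\varrho_{(int)}=\frac1{n!}\sum_\tau\hat P_\tau\hat\varrho^{(l)}\hat P_\tau^\dag$, into $\mathrm{Tr}(\hat S^{(\lambda)}\hat\varrho_{(int)})$. Each term gives $\mathrm{Tr}(\hat P_{\tau^{-1}\sigma\tau}\hat\varrho^{(l)})=J(\tau^{-1}\sigma\tau)$. Relabeling $\sigma\to\tau\sigma\tau^{-1}$ and using that $\chi_\lambda$ is a class function, the $\tau$-average collapses, so $\mathcal D^{(\lambda)}=\frac{\chi_\lambda(e)}{n!}\sum_\sigma\chi_\lambda(\sigma)J(\sigma)$. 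This step does not actually need $J$ itself to be a class function. The hypothesis only matters for identifying these numbers with the coefficients in the expansion Eq.~(\ref{Jnew}), via orthogonality Eq.~(\ref{chi_orth}).

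\emph{Probability distribution.} Nonnegativity holds because each $\hat S^{(\lambda)}$ is an orthogonal projector. These are Hermitian, since $\chi_\lambda(\sigma^{-1})=\chi_\lambda(\sigma)$ is real, and idempotent with completeness as in Eq.~(\ref{SlambSmu}), so $\mathrm{Tr}(\hat S^{(\lambda)}\hat\varrho_{(int)})\ge0$. The sum equals $\mathrm{Tr}\,\hat\varrho_{(int)}=1$ by $\sum_\lambda\hat S^{(\lambda)}=\hat I$. One can also check this on the $J$ side: $\sum_\lambda\chi_\lambda(e)\chi_\lambda(\sigma)=n!\,\delta_{\sigma,e}$ by column orthogonality, and $J(e)=1$.

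\emph{Main obstacle.} The delicate step is the sign bookkeeping in Eq.~(\ref{id_SpmSlambda}), which is where the transposition $\lambda\to\lambda^T$ for fermions arises. One must pass the permutation from the internal to the visible factor through $\hat S^{(\pm)}$ and keep track of $\sigma$ versus $\sigma^{-1}$ together with the factor $\vare(\sigma)$. The projector property Eq.~(\ref{SlambSmu}) I would take as standard, following from Eq.~(\ref{chi_orth}).
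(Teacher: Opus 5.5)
Your proposal is correct and follows essentially the same route as the paper: the intertwining identity Eq.~(\ref{id_SpmSlambda}), followed by partial traces using $\hat S^{(\pm)}\hat\varrho=\hat\varrho$, then inserting the averaged label state and using the class invariance of $\chi_\lambda$, and finally the projector properties Eq.~(\ref{SlambSmu}) for nonnegativity and normalization. You derive the identity slightly more directly from $\hat P_\sigma\hat S^{(\pm)}=\vare(\sigma)\hat S^{(\pm)}$ rather than the double average of Eq.~(\ref{id_PP}), and your remark that the character formula does not need $J$ to be a class function matches the paper's Eq.~(\ref{D_lambdaJ}), which uses only the class invariance of $\chi_\lambda$.
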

  
\subsection{States of $n=3$ particles with  given ($\mathcal{D}, \widetilde{\mathcal{D}}$) }
Consider $n=3$ particles occupying three distinct orthogonal visible modes $|k\rangle$, $k=1,2,3$. Their visible state can be written as
\be
\hat{\varrho}_{(vis)} = \frac{1}{6} \sum_{\sigma,\pi}\vare(\pi\sigma)J(\pi^{-1}\sigma)|\sigma\rangle\langle \pi|, 
\en{CS1}
with  $|\sigma\rangle$ given by Eq.~(\ref{sigmaST}) for $n=3$. The symmetric group $S_3$ has three characters, two one-dimensional $\chi_{(3)}(\sigma)=1$ and  $\chi_{(1,1,1)}(\sigma) =  \mathrm{sgn}(\sigma)$ and one two-dimensional $\chi_{(2,1)}(\sigma)$.  The characters  have the same values on permutations of the same conjugacy class, i.e.,  of the same  cycle type. The group $S_3$ has three conjugacy classes, corresponding to the three partitions: the identity $e$, the transpositions $t=\{(1,2), (2,3), (1,3)\}$, and the two $3$-cycles $c=\{(1,2,3),(1,3,2)\}$.   The values of $\chi_{(2,1)}$ are as follows
\be
\chi_{(2,1)}(e) = 2, \quad \chi_{(2,1)}(t) = 0, \quad \chi_{(2,1)}(c) = -1. 
\en{CS2}
Let us write down the explicit matrix form of the state in Eq. ~(\ref{CS1}) in the basis $|\sigma\rangle$ ordered by permutation $\sigma$: 
\[
e,\; (1,2), \;(2,3), \; (1,3), \;(1,2,3),\; (1,3,2)
\]
 (i.e., $e$, three transpositions, two $3$-cycles), or in the explicit form $|i_1,i_2,i_3\rangle$ ($i_k:= \sigma^{-1}(k)$) as follows 
\be
|1,2,3\rangle, \; |2,1,3\rangle, \;  |1,3,2\rangle, \;  |3,2,1\rangle, \;  |3,1,2\rangle, \;  |2,3,1\rangle. 
\en{CS3}
The indistinguishability class-function in Eq. (\ref{Jnew}) of the main text   reads  
\be
J(\sigma)= \mathcal{D}  + \widetilde{\mathcal{D}}\mathrm{sgn}(\sigma) + (1- \mathcal{D}-\widetilde{\mathcal{D}})\frac{1}{2}\chi_{(2,1)}(\sigma).
\en{CS4}
From Eq. (\ref{CS2}) we obtain
\be
J(e) = 1, \quad J(t) = \mathcal{D}-\widetilde{\mathcal{D}}, \quad J(c) = \frac{3(\mathcal{D}+\widetilde{\mathcal{D}})-1}{2}. 
\en{CS5}

Since $J(\sigma)$ depends only on the conjugacy class of $\sigma$, the matrix elements are determined entirely by the two parameters  $a = \mathcal{D}-\widetilde{\mathcal{D}}$ and $b = \frac{3(\mathcal{D}+\widetilde{\mathcal{D}})-1}{2}$. Consider, for instance,  bosons $\vare(\sigma)=1$. We obtain the matrix 
$\varrho_{(vis)}$ of the visible state $\hat{\varrho}_{(vis)}$  Eq. (\ref{CS1}) in the   ordered basis of Eq. (\ref{CS3}) as follows
\be
\varrho^{(B)}_{(vis)} = \frac{1}{6}
\begin{pmatrix}
1 & a & a & a & b & b \\
a & 1 & b & b & a & a \\
a & b & 1 & b & a & a \\
a & b & b & 1 & a & a \\
b & a & a & a & 1 & b \\
b & a & a & a & b & 1
\end{pmatrix}.
\en{CS6}
For fermions, due to $\mathrm{sgn}(e,t,c) = (1,-1,1)$,  the respective matrix is 
\be
\varrho^{(F)}_{(vis)} =\mathcal{S} \varrho^{(B)}_{(vis)} \mathcal{S}, \quad \mathcal{S} = \mathrm{diag}(1,-1,-1,-1,1,1).  
\en{fermionCS6}
 
 The matrices in Eqs. (\ref{CS6})-(\ref{fermionCS6}) are  real and obviously symmetric (i.e., Hermitian). To check the  positive semidefiniteness, let us diagonalize $\varrho^{(B)}_{(vis)}$.   The eigenvalues with their multiplicities read:
\be
\lambda_1=\frac{1+3a+2b}{6} \; (\times 1), \quad 
 \lambda_2=\frac{1-3a+2b}{6} \; (\times 1), \quad 
 \lambda_3=\frac{1-b}{6} \; (\times 4)
 \en{CS7}
 Therefore, the matrix $\varrho_{(vis)} $ Eq. (\ref{CS6}) is positive semidefinite iff 
 \[
b\leq 1,\quad |a|\leq \frac{1+2b}{3} \quad \iff  \quad \mathcal{D}+\widetilde{\mathcal{D}}\le 1, \quad \mathcal{D}\ge -\frac{\widetilde{\mathcal{D}}}{5},  \quad \widetilde{\mathcal{D}}\ge -\frac{\mathcal{D}}{5}.
\]
Since $\mathcal{D},\widetilde{\mathcal{D}}\ge0$ by definition, the latter two inequalities are automatically satisfied, leaving only the natural condition $\mathcal{D}+\widetilde{\mathcal{D} }\le1$. Thus for every pair $(\mathcal{D},\widetilde{\mathcal{D}})$  satisfying $\mathcal{D}+\widetilde{\mathcal{D} }\le1$ there is a visible state given by Eqs.~(\ref{CS1}) and (\ref{CS5}), whereas the respective  label state can be  state of  Eq.~(\ref{mixStJ}). In the matrix form, the latter  coincides   with  the matrix of the visible state of bosons, whereas  for fermions  it is   the $\mathrm{sgn}$-reflection of the respective matrix. 

Setting $\mathcal{D}=\frac{1}{6}$ and $\widetilde{\mathcal{D}}\ne \frac{1}{6}$ gives $a\ne0$ and $b\ne0$, so the visible state is non-diagonal despite having the same   indistinguishability measure as maximally distinguishable particles. This provides the explicit example of a visible state whose existence was posed as an open problem in Ref.~\cite{DistNew}.

 \subsection{Maximal distinguishability and   the Plancherel distribution}
Maximal distinguishability corresponds to the indistinguishability function  $J^{(d)}(\sigma) = \delta_{\sigma,e}$ and occurs for particles occupying distinct visible modes. It can be understood  also   as a special probability distribution $\{ \mathcal{D}^{(\lambda^\prime)}_d\}$ on the symmetry sectors of the visible state. Since $\chi_R(\sigma):= n!\delta_{\sigma,e}$ is   the character of the regular representation of the symmetric group  $S_n$,  the standard decomposition of the regular character  into the irreducible characters  immediately gives
\be
J^{(d)}(\sigma) =  \frac{1}{n!}\sum_{\lambda\vdash n} \chi_\lambda(e) \chi_\lambda(\sigma)= \sum_{\lambda\vdash n} \mathcal{D}^{(\lambda)}_d  \frac{ \chi_\lambda(\sigma)}{\chi_\lambda(e)},
\en{MdJ}
where 
\[
\mathcal{D}^{(\lambda)}_d =  \mathcal{D}^{(\lambda^\prime)}_d =\frac{\chi^2_{\lambda}(e)}{n!},
\]
i.e.,  the  Plancherel distribution over the partitions $\lambda\vdash n$ (Young diagrams). Thus, maximal distinguishability corresponds to the Plancherel distribution over the generalized symmetry sectors.

 \subsection{Decomposition of the visible state into  generalized symmetry sectors}
 
 For every partition $\lambda\vdash n$ with $\mathcal D^{(\lambda^\prime)}>0$ Eq.~(\ref{D_lam_intA}), define the normalized projected state
 \be
 \hat{\varrho}^{(\lambda)} _{(vis)}:= \frac{1}{\mathcal{D}^{(\lambda^\prime)} }\hat{S}^{(\lambda)} \hat{\varrho}_{(vis)}\hat{S}^{(\lambda)}.
 \en{projstlam}
Since every projector $\hat{S}^{(\lambda)}$ belongs to the group algebra of $S_n$, it commutes with all permutation operators. Therefore,
 \[
 \hat{P}_\sigma  \hat{\varrho}^{(\lambda)}_{(vis)} \hat{P}_\sigma^\dag =  \hat{\varrho}^{(\lambda)}_{(vis)}, \quad  \forall \sigma \in S_n,
 \]
   so each projected state is again a valid visible state of identical particles. Furthermore, the projectors onto distinct irreducible symmetry sectors are mutually orthogonal,
 \[
 \hat{S}^{(\lambda)} \hat{\varrho}_{(vis)}\hat{S}^{(\mu)}=\delta_{\lambda,\mu},
 \]
 which implies
 \[
  \hat{S}^{(\lambda)} \hat{\varrho}_{(vis)}  \hat{S}^{(\mu)} = 0, \quad  \lambda\ne \mu. 
 \]
 Consequently, the visible state decomposes uniquely into orthogonal generalized symmetry sectors,
\be
 \hat{\varrho} _{(vis)} =  \bigoplus_{\lambda\vdash n} \hat{S}^{(\lambda)} \hat{\varrho}_{(vis)}\hat{S}^{(\lambda)} =\bigoplus_{\lambda\vdash n}  \mathcal{D}^{(\lambda^\prime)} \hat{\varrho}^{(\lambda)} _{(vis)} . 
\en{vis_st_lambda}
 Thus, the generalized symmetry-sector probabilities $\mathcal D^{(\lambda^\prime)}$ give the weights of the orthogonal components of the visible state, while the normalized states $\hat{\varrho}^{(\lambda)}_{(vis)}$ describe its structure within each irreducible symmetry sector. This decomposition is the Schur–Weyl decomposition of the visible state.

\section{Appendix D: Indistinguishability boost  of partially distinguishable particles }
\subsection{Temporal Gram matrix for arbitrary overlap $\rho$}
For the subsequent calculations it is convenient to introduce an orthonormal basis ${u_1,u_2}$ in the two-dimensional subspace spanned by the temporal modes.
Expressing $h_{1,2}\equiv |h_{1,2}\rangle$ in this basis gives
\begin{equation}
\label{AE_u01}
h_1=u_1,
\qquad
h_2=\rho u_1+\sqrt{1-\rho^2}\,u_2 .
\end{equation}
The temporal states $f_j\equiv |f_j\rangle$ then take the form
\begin{equation}
\label{AE_fk}
f_1=u_1,
\qquad
f_2=\rho u_1+\sqrt{1-\rho^2}\,u_2,
\qquad
f_k=\alpha_k(\rho)u_1+\beta_k(\rho)u_2,
\end{equation}
where, for $k=3,\ldots,7$,
\begin{equation}
\label{AE_alphbet}
\alpha_k(\rho)=\frac{1+\rho e^{-i\theta_k}}{\sqrt{2+2\rho\cos\theta_k}},
\qquad
\beta_k(\rho)=\frac{\sqrt{1-\rho^2}\,e^{-i\theta_k}}{\sqrt{2+2\rho\cos\theta_k}} .
\end{equation}
Introducing the coefficient vectors
\begin{equation}
c_1=\left(\begin{array}{c}1\\ 0\end{array}\right),  \quad c_2=  \left(\begin{array}{c}0\\ 1\end{array}\right),
\quad c_k=\frac{1}{\sqrt{2+2\rho\cos\theta_k}}\left(\begin{array}{c}1\\ e^{-i\theta_k}\end{array}\right),
\quad k=3,\ldots,7,
\end{equation}
and
\begin{equation}
C_\rho=\begin{pmatrix}1&\rho\\ \rho&1\end{pmatrix}.
\end{equation}
the temporal Gram matrix $B(\rho)$ can be written compactly as
\begin{equation}
B_{ij}(\rho)=c_i^\dagger C_\rho c_j .
\end{equation}
In components, for $k,l\geq3$,
\begin{equation}
\label{AE_Bkl}
B_{kl}(\rho)=
\frac{1+\rho e^{-i\theta_l}+\rho e^{i\theta_k}+e^{i(\theta_k-\theta_l)}}
{\sqrt{2+2\rho\cos\theta_k}\sqrt{2+2\rho\cos\theta_l}},
\end{equation}
while
\begin{equation}
B_{11}=B_{22}=1, \quad B_{12}= \rho.
 \end{equation}

\subsection{Permanent evaluation via generating functions}
\subsubsection{Explicit formula for $\operatorname{per}B(\rho)$}

We derive a closed-form expression for the temporal permanent $\operatorname{per}B(\rho)$. Throughout this subsection we use the representation of the temporal states given in Eqs.~(\ref{AE_u01})–(\ref{AE_alphbet}).  For a fixed occupation number $m$ of the basis state $u_1$, the distinct assignments of the seven labelled photons to the ordered occupation pattern $u_1^{\otimes m}u_2^{\otimes(7-m)}$  are indexed by the quotient 
\begin{equation}
        S_7/S_{m,7-m},
        \qquad S_{m,7-m}\equiv S_m\otimes S_{7-m} .
\end{equation}
 Define the amplitude associated with this occupation sector by
 \begin{equation}
q_m(\rho)=
\sum_{\mu\in S_7/S_{m,7-m}}
\prod_{r=1}^{m}\alpha_{\mu(r)}(\rho)
\prod_{r=m+1}^{7}\beta_{\mu(r)}(\rho).
\label{eq:qm-coset}
\end{equation}
 Since permutations among the identical basis states $u_1$ and $u_2$ contribute a factor $m!(7-m)!$, the permanent can be written as
\begin{equation}
\operatorname{per}B(\rho)=
\sum_{m=0}^{7}m!(7-m)!\left| q_m(\rho)\right|^2 .
\label{eq:perB-coset}
\end{equation}
It is convenient to introduce the generating function
\begin{equation}
        Q_B(x,y;\rho)=\prod_{j=1}^{7}
        \bigl(\alpha_j(\rho)x+\beta_j(\rho)y\bigr),
\end{equation}
where $x$ and $y$ are auxiliary variables. Then
 $q_m(\rho)=[x^m y^{7-m}]Q_B(x,y;\rho)$ ($[\ldots]$  and Eq.~(\ref{eq:perB-coset}) becomes
\begin{align}
\operatorname{per}B(\rho)
&=\sum_{m=0}^{7}m!(7-m)!
\left| [x^m y^{7-m}]Q_B(x,y;\rho) \right|^2.
\label{eq:perB-coefficient}
\end{align}
 The above representation leads to a particularly simple rational expression for $\operatorname{per}B(\rho)$. Let
 \begin{equation}
        s= \sqrt{1-\rho^2},    \qquad \zeta = e^{-2\pi i/5} .
\end{equation}
Then the  five exponents with the phases $e^{-i\theta_k}$, for $k=3,\ldots,7$, are  the following powers of $\zeta$:
($\zeta^{-2},\zeta^{-1},1,\zeta,\zeta^2$).  Factoring out the common normalization yields
\begin{align}
Q_B(x,y;\rho)
&=\frac{x(\rho x+s y)}{\sqrt{\Delta(\rho)}}
\prod_{\zeta^5=1}\bigl(x+\zeta(\rho x+s y)\bigr)
=\frac{x(\rho x+s y)}{\sqrt{\Delta(\rho)}}
\left\{x^5+(\rho x+s y)^5\right\},
\label{eq:QB-collapse}
\end{align}
where
\begin{equation}
\Delta(\rho)=\prod_{k=3}^7\bigl(2+2\rho\cos\theta_k\bigr)
=2(1+\rho)(\rho^2+2\rho-4)^2 .
\label{eq:Delta-rho}
\end{equation}
Since all coefficients of $Q_B(x,y;\rho)$ are real, it is sufficient to define
\begin{equation}
\widetilde Q_B(x,y;\rho)
=x(\rho x+s y)\left\{x^5+(\rho x+s y)^5\right\},
\end{equation}
so that 
\begin{equation}
\operatorname{per}B(\rho)
=\frac{1}{\Delta(\rho)}
\sum_{m=0}^{7}m!(7-m)!
\left([x^m y^{7-m}]\widetilde Q_B(x,y;\rho)\right)^2 .
\label{eq:perB-tildeQ}
\end{equation}
  A straightforward binomial expansion gives
  \be
\sum_{m=0}^{7}m!(7-m)!
\left([x^m y^{7-m}]\widetilde Q_B(x,y;\rho)\right)^2
=1440(1+\rho)
\left(\rho^6-\rho^5+7\rho^4-7\rho^3
+7\rho^2-\rho+1\right).
\en{eq:perB-numerator}
Substituting Eqs.(\ref{eq:Delta-rho}) and (\ref{eq:perB-numerator}) into Eq.(\ref{eq:perB-tildeQ}) yields
\begin{equation}
\operatorname{per}B(\rho)=
\frac{720\left(\rho^6-\rho^5+7\rho^4-7\rho^3+7\rho^2-\rho+1\right)}{\left(\rho^2+2\rho-4\right)^2}.
\label{eq:perB-rational}
\end{equation}
As a consistency check,
 $\operatorname{per}B(0)=45$ in agreement with Ref.~\cite{Drury2016}, while
$\operatorname{per}B(1)=7!$,  (completely indistinguishable photons).

\subsubsection{Explicit   formula  for $\operatorname{per}(A\circ B(\rho))$}
We now derive a closed-form expression for the permanent of the combined Gram matrix $A\circ B(\rho)$. Let 
\[
a_1=\frac{1}{\sqrt{2}}, \quad b_1 = 0,\quad   a_2 =0,\quad  b_2 = \frac{1}{\sqrt{2}}
\]
and  
\[ 
a_k = \frac{1}{\sqrt{2}},\quad b_k = \frac{e^{i\theta_k}}{\sqrt{2}},\quad  k=3,\ldots, 7. 
\]
We have  
\begin{equation}
|\phi_j\rangle=\left(\begin{array}{c}a_j \\ b_j\end{array} \right).
\end{equation}
The one-photon internal state $|\phi_j\rangle\otimes |f_j\rangle$ has four amplitudes,
\begin{equation}
\gamma_j=(a_j\alpha_j,\;a_j\beta_j,\;b_j\alpha_j,\;b_j\beta_j),
\end{equation}
 corresponding to the basis states
$e_1u_1,e_1u_2,e_2u_1,e_2u_2$ and $\alpha_k$ and $\beta_k$ from Eq. (\ref{AE_alphbet}).  
For an occupation vector
\begin{equation}
\bm m=(m_{11},m_{12},m_{21},m_{22}),
\qquad |\bm m|=7,
\end{equation}
the distinct assignments are indexed by the quotient
\begin{equation}
S_7/S_{\bm m},
\qquad
S_{\bm m}\equiv S_{m_{11}}\otimes S_{m_{12}}\otimes S_{m_{21}}\otimes S_{m_{22}} .
\end{equation}
Define 
\begin{equation}
q_{\bm m}=
\sum_{\mu\in S_7/S_{\bm m}}
\prod_{r=1}^{m_{11}}\gamma_{\mu(r),00}
\prod_{r=m_{11}+1}^{m_{11}+m_{12}}\gamma_{\mu(r),01}
\prod_{r=m_{11}+m_{12}+1}^{m_{11}+m_{12}+m_{21}}\gamma_{\mu(r),10}
\prod_{r=m_{11}+m_{12}+m_{21}+1}^{7}\gamma_{\mu(r),11},
\end{equation}
then  
\begin{equation}
\operatorname{per}(A\circ B(\rho))=
\sum_{|\bm m|=7}
\left(\prod_{a,b\in\{1,2\}}m_{ab}!\right)
\left|q_{\bm m} (\rho)\right|^2  .
\label{eq:perG-coset}
\end{equation}
Equivalently, introducing the generating function
\begin{equation}
Q_{A\circ B}(\bm x;\rho)=\prod_{j=1}^7
\left(a_j\alpha_j x_{11}+a_j\beta_j x_{12}
+b_j\alpha_j x_{21}+b_j\beta_j x_{22}\right),
\end{equation}
one obtains 
\begin{align}
\operatorname{per}(A\circ B(\rho))
&=\sum_{|\bm m|=7}
\left(\prod_{a,b\in\{1,2\}}m_{ab}!\right)
  \left| [\bm x^{\bm m}]
 Q_{A\circ B}(\bm x;\rho)\right|^2.
\label{eq:perG-coefficient}
\end{align}
Evaluating Eq.~(\ref{eq:perG-coefficient}) symbolically yields the compact rational expression
\begin{equation}
\operatorname{per}(A\circ B(\rho))
=\frac{5\bigl(1237-1237\rho-968\rho^2
+968\rho^3+232\rho^4-160\rho^5\bigr)}{8\bigl(\rho^2+2\rho-4\bigr)^2}.
\label{eq:perAB-rational}
\end{equation}
As a consistency check, 
\begin{equation}
\operatorname{per}(A\circ B(0))=\operatorname{per}|A|^2=\frac{6185}{128}=48.3203125 >  \operatorname{per}B(0) = \operatorname{per}A=45
\end{equation}
in agreement with Ref.~\cite{Drury2016}.

The expressions for $\operatorname{per}B(\rho)$ and $\operatorname{per}(A\circ B(\rho))$ can also be derived independently from the cycle expansion of the permanent, discussed in the following subsection. 

\subsection{Cycle expansions of the matrix permanents}

The mechanism proposed in the main text attributes the indistinguishability boost to a $\mathbb Z_2$ phase flip of several long-cycle sectors. The resulting increase of the projection onto the symmetric subspace occurs when the positive cycle contributions dominate over the negative ones. As the temporal overlap $\rho$ increases, the collective phases unwind and this dominance is gradually lost. Below we compute the cycle contributions explicitly and verify this mechanism. 
\subsubsection{Cycle-type generating function }
Let $\label{lambda_m} \lambda = 1^{m_1}2^{m_2}\cdots n^{m_n}$
 be  a partition of $n$, so that $n = \sum_k k m_k$.  We begin with the classical cycle index of the symmetric group $S_n$,
\be
Z(p_1,\ldots, p_n) = \frac{1}{n!} \sum_{\sigma\in S_n}p_1^{m_1}\cdot \ldots  \cdot p_n^{m_n} ,
\en{Zn}
where   $p_k$  marks  cycles of length $k$. Using the standard formula for the number of permutations  $\mathcal{N}_\lambda$  of cycle type $\lambda$
\begin{equation}
\mathcal{N}_\lambda = \frac{n!} {\prod_{k=1}^n k^{m_k} m_k! }
\end{equation}
one obtains the generating function
\begin{equation}
\Phi(p;t):= \label{Z_n}
 \sum_{n\ge0} Z(p_1,\ldots, p_n)\, t^n
= \exp\left(
\sum_{k\ge1}\frac{p_k t^k}{k}
\right).
\end{equation}
 
To construct an analogous generating function for the permanent, introduce auxiliary variables $x_1,\ldots,x_n$ and the diagonal matrix
\begin{equation}
X=\mathrm{diag}(x_1,\dots,x_n).
\end{equation}
The permanent can then be written as
\begin{equation}
\operatorname{per}(M)
=
[x_1x_2\cdots x_n]\,
\exp\left(
\sum_{k\ge1}\frac{1}{k}
\sum_{\substack{i_1,\dots,i_k}}
M_{i_1 i_2}\cdots M_{i_k i_1}
x_{i_1}\cdots x_{i_k}
\right).
\end{equation}
 The coefficient extraction $[x_1x_2\cdots x_n]$ enforces that each index appears exactly once. Consequently, from $\mathrm{Tr}[(XM)^k]$ only genuine cycles of length $k$ contribute, while the factor $1/k$ removes the $k$-fold overcounting due to cyclic rotations of the same cycle.
Comparing with Eq.~(\ref{Z_n}), we obtain the cycle-type generating function
\begin{equation}
\Phi(p, XM):=
\exp\!\left(
\sum_{k=1}^{n}
\frac{p_k}{k}\,
\mathrm{Tr}\!\left[(XM)^k\right]
\right).
\end{equation}
  The contribution to the permanent from permutations of cycle type $\lambda$  is obtained by coefficient extraction:
 \begin{eqnarray}
\operatorname{per}_{\lambda}(M)= [x_1x_2\cdots x_n]
[p_1^{m_1}\cdots p_n^{m_n}] \, \Phi(p;XM)
= [x_1\cdots x_n] \prod_{k=1}^{n} \frac{1}{m_k!} \left( \frac{\mathrm{Tr}[(XM)^k]}{k} \right)^{m_k}.
\end{eqnarray}
  Summing over all partitions $\lambda\vdash n$ yield the full permanent:
\begin{equation}
\operatorname{per}(M)
= \sum_{\lambda\vdash n}
\operatorname{per}_{\lambda}(M).
\end{equation}

 \subsection{ Comparative analysis of   $\operatorname{per} B(\rho)$ and  $\operatorname{per}(A\circ B(\rho))$}
Introduce 
\[
d_1=d_2=1,\qquad
d_k=\sqrt{2+2\rho\cos\theta_k},\qquad k=3,\ldots,7 .
\]
Then every entry of the temporal matrix $B$ can be written as
\[
B_{ij}(\rho)=\frac{N_{ij}(\rho)}{d_i d_j},
\]
where the numerators $N_{ij}(\rho)$ are polynomials of degree at most one in $\rho$ (see Eq. (\ref{AE_Bkl})).  Since each permutation term in the permanent contains every row and column exactly once, the denominator
\[
\prod_{i=1}^7 d_i^2= \prod_{k=3}^7 \left(2+2\rho\cos\theta_k\right) = \Delta(\rho)=2(1+\rho)(\rho^2+2\rho-4)^2. 
\]
factors out uniformly. Therefore,
 \begin{equation}
\label{AE_perABN}
\operatorname{per} B(\rho)=\frac{\operatorname{per} N(\rho)}{2(1+\rho)(\rho^2+2\rho-4)^2}, \quad \operatorname{per}(A\circ B(\rho))
= \frac{\operatorname{per}(A\circ N(\rho))}
{2(1+\rho)(\rho^2+2\rho-4)^2}.
\end{equation}

Consequently, the comparison between $\operatorname{per}B(\rho)$ and $\operatorname{per}(A\circ B(\rho))$ reduces to the comparison between the polynomial permanents $\operatorname{per}N(\rho)$ and $\operatorname{per}(A\circ N(\rho))$.

 \subsubsection{Cycle  decomposition  of $\operatorname{per} N(\rho)$ and $\operatorname{per}(A\circ N(\rho))$}

Rewrite the contribution from a given  cycle type $\operatorname{per}_\lambda M(\rho)$ by extracting the sign factor $s^\lambda$ is evaluated at 
 $\rho=0$:
\[
 \operatorname{per}_\lambda M(\rho)
=s^\lambda_M P^\lambda_{ M}(\rho),
\qquad P^\lambda_M (0)>0,\quad s^\lambda_M\in \{-1,1\}.
\]
Observe that   collective phases \cite{3phPhase,nphPhases}  may  appear  when  there are cycles with three or more  photons in the partition. Thus the lowest cycle types    consisting of  fixed points and transpositions of two photons  have $s^\lambda_M=1$.  Below we give two tables of   $s^\lambda_M$ and $ P^\lambda_{M}(\rho)$ for $M=N(\rho)$ and $M =A\circ N(\rho)$.

 \[
\boxed{
\begin{array}{c|c|c}
\lambda & s^\lambda &  P^\lambda_N(\rho)\\
\hline
(7) & - &
120(1+\rho)^2
\left(3-6\rho+8\rho^2-10\rho^3+\rho^4-2\rho^5\right)
\\
(6,1) & - &
40(1+\rho)
\left(17-17\rho-43\rho^2+43\rho^3-64\rho^4+28\rho^5-6\rho^6\right)
\\
(5,2) & - &
24(1+\rho)^2
\left(5-10\rho-17\rho^2+44\rho^3-36\rho^4-7\rho^5\right)
\\
(5,1,1) & - &
24(1+\rho)
\left(18-18\rho-127\rho^2+127\rho^3-57\rho^4+20\rho^5-5\rho^6\right)
\\
\hline
(4,3) & + &
140(1+\rho)^2
\left(1-2\rho-4\rho^2+10\rho^3-3\rho^4+\rho^5\right)
\\
(4,2,1) & + &
180(1+\rho)
\left(1-\rho+7\rho^2-7\rho^3+7\rho^4-\rho^5+\rho^6\right)
\\
(4,1,1,1) & + &
20(1+\rho)
\left(2-2\rho+105\rho^2-105\rho^3+14\rho^4+5\rho^5+2\rho^6\right)
\\
(3,3,1) & + &
40(1+\rho)
\left(6-6\rho-9\rho^2+9\rho^3+33\rho^4-21\rho^5+2\rho^6\right)
\\
(3,2,2) & + &
10(1+\rho)^2
\left(18-36\rho+62\rho^2-88\rho^3+58\rho^4+7\rho^5\right)
\\
(3,2,1,1) & + &
20(1+\rho)
\left(40-40\rho+56\rho^2-56\rho^3+70\rho^4-33\rho^5+5\rho^6\right)
\\
(3,1,1,1,1) & + &
10(1+\rho)
\left(28-28\rho+74\rho^2-74\rho^3+4\rho^4+9\rho^5+\rho^6\right)
\\
(2,2,2,1) & + &
10(1+\rho)
\left(26-26\rho+20\rho^2-20\rho^3-\rho^4+19\rho^5+3\rho^6\right)
\\
(2,2,1,1,1) & + &
10(1+\rho)
\left(60-60\rho+34\rho^2-34\rho^3+27\rho^4-8\rho^5+2\rho^6\right)
\\
(2,1,1,1,1,1) & + &
2(1+\rho)
\left(140-140\rho+46\rho^2-46\rho^3+11\rho^4+9\rho^5+\rho^6\right)
\\
(1,1,1,1,1,1,1) & + &
2(1+\rho)(\rho^2+2\rho-4)^2 .
\end{array}
}
\]
 
\[
\boxed{ \begin{array}{c|c|c}
\lambda & s^\lambda & P^\lambda_{A\circ N}(\rho)\\
\hline
(7) &+ &
\dfrac{5}{8}(1+\rho)^2
\left(81-162\rho+32\rho^2+98\rho^3-67\rho^4\right)
\\
(6,1) & + &
\dfrac{5}{8}(1+\rho)
\left(219-219\rho-362\rho^2+362\rho^3-21\rho^4-47\rho^5\right)
\\
(5,2) & + &
\dfrac{5}{8}(1+\rho)^2
\left(72-144\rho+29\rho^2+86\rho^3-49\rho^4\right)
\\
(5,1,1) & + &
\dfrac{1}{8}(1+\rho)
\left(1449-1449\rho-2371\rho^2+2371\rho^3+14\rho^4-230\rho^5\right)
\\
(4,3) & + &
\dfrac{5}{8}(1+\rho)^2
\left(51-102\rho+27\rho^2+48\rho^3-17\rho^4\right)
\\
(4,2,1) & + &
\dfrac{5}{8}(1+\rho)
\left(199-199\rho-229\rho^2+229\rho^3+64\rho^4-46\rho^5\right)
\\
(4,1,1,1) & + &
\dfrac{5}{4}(1+\rho)
\left(128-128\rho-140\rho^2+140\rho^3+23\rho^4-21\rho^5\right)
\\
(3,2,2) & + &
\dfrac{5}{8}(1+\rho)^2
\left(42-84\rho+64\rho^2-44\rho^3+31\rho^4\right)
\\
(3,3,1) & + &
\dfrac{5}{4}(1+\rho)
\left(39-39\rho-37\rho^2+37\rho^3+24\rho^4-12\rho^5\right)
\\
(3,2,1,1) & + &
\dfrac{5}{8}(1+\rho)
\left(333-333\rho-141\rho^2+141\rho^3+128\rho^4-48\rho^5\right)
\\
(3,1,1,1,1) & + &
\dfrac{5}{4}(1+\rho)
\left(108-108\rho-19\rho^2+19\rho^3+18\rho^4-4\rho^5\right)
\\
(2,2,2,1) & + &
\dfrac{5}{8}(1+\rho)
\left(66-66\rho+\rho^2-\rho^3-9\rho^4+35\rho^5\right)
\\
(2,2,1,1,1) & + &
\dfrac{25}{4}(1+\rho)
\left(28-28\rho+\rho^2-\rho^3+6\rho^4\right)
\\
(2,1,1,1,1,1) & + &
\dfrac{5}{2}(1+\rho)
\left(60-60\rho+\rho^2-\rho^3+5\rho^4+2\rho^5\right)
\\
(1,1,1,1,1,1,1) & + &
2(1+\rho)(\rho^2+2\rho-4)^2 .
\end{array}
}
\]


\subsection{Condition for the indistinguishability boost}
Combining Eqs.~(\ref{eq:perB-rational}) and (\ref{eq:perAB-rational}), we obtain
\begin{equation}
\begin{aligned}
\operatorname{per}(A\circ B(\rho))-\operatorname{per}B(\rho)
&=-\frac{5}{8\bigl(\rho^2+2\rho-4\bigr)^2}
\bigl(1152\rho^6-992\rho^5+7832\rho^4
\\
&\qquad{}-9032\rho^3+9032\rho^2+85\rho-85\bigr).
\end{aligned}
\label{eq:per-difference-rational}
\end{equation}
The condition for equality of the two permanents,
\begin{equation}
\operatorname{per}(A\circ B(\rho))=\operatorname{per}B(\rho)
\end{equation}
has a unique positive solution,
\begin{equation}
\rho_*=0.096583467644\ldots.
\end{equation}
Therefore,
\begin{equation}
\operatorname{per}(A\circ B(\rho))> 
\operatorname{per}B(\rho) \quad \Longleftrightarrow \quad 
0\le \rho < \rho_*,
\end{equation}
which is precisely the regime where the additional polarization labels increase the multiparticle indistinguishability.

\section{Appendix E:  No  indistinguishability  boost classes}
  \subsection{I. Nearly indistinguishable particles}
  
 Our goal is to determine the sign of  $\operatorname{per}(A\circ B)-\operatorname{per}(B)$ in a neighborhood of the rank-one correlation matrix  $E$, whose entries are all equal to one. We therefore consider two correlation matrices of the form
 \be 
A=E+X, \quad B=E+Y, \quad X_{kk}=Y_{kk}=0,
\en{AdE1}
where  $X$ and $Y$ are small Hermitian perturbations. 

For a matrix $H$ with sufficiently small entries,
\[
\operatorname{per}(E+H) =  n!+ (n-1)!\sum_{i,j}H_{ij} + \frac{(n-2)!}{2} \sum_{\substack{i\neq k\\ j\neq l}} H_{ij}H_{kl} + O(H^3).
\]
The linear term arises from selecting a single perturbed entry in the permanent expansion, whereas the quadratic term corresponds to selecting two entries from different rows and columns. Since
\[
A\circ B= (E+X)\circ(E+Y) = E+X+Y+X\circ Y,
\]
we obtain 
\[
\begin{aligned}
\operatorname{per}(A\circ B)
 =n! + (n-1)! \sum_{i,j} (X_{ij}+Y_{ij}+X_{ij}Y_{ij}) 
 + \frac{(n-2)!}{2} \sum_{\substack{i\neq k\\ j\neq l}} (X_{ij}+Y_{ij}) (X_{kl}+Y_{kl}) + O(\{X,Y\}^3).
\end{aligned}
\]
Subtracting the corresponding expansion of 
 $\operatorname{per}(B) =\operatorname{per}(E+Y)$ gives 
 \[
\begin{aligned}
\operatorname{per}(A\circ B)-\operatorname{per}(B)
=(n-1)!\sum_{i,j}X_{ij}  + (n-1)!\sum_{i,j}X_{ij}Y_{ij} + (n-2)! \sum_{\substack{i\neq k\\ j\neq l}} X_{kl}Y_{ij} 
+\frac{(n-2)!}{2} \sum_{\substack{i\neq k\\ j\neq l}} X_{ij}X_{kl} + O(\{X,Y\}^3).
\end{aligned}
\]
Thus the leading dependence on the additional labels is determined by the perturbation   $X$.
\subsubsection{Structure of correlation matrices near $E$}
 After a diagonal unitary gauge transformation,
\[
A\mapsto D^*AD,
\]
which leaves  $\operatorname{per}(A\circ B)$    invariant, the matrix $A$ can be represented as
\[
A_{kl}=u_k^\dag u_l,
\]
where, since $A_{kl}$ is close to $1$, by the normalization condition $\|u_k\|^2 =  u^\dag_k u_k  =1$,  there are such column-vectors   $e_0$, $\|e_0\|=1$,  and $\xi_k$ that 
\[
u_k=\left(1-\frac12  \|\xi_k\|^2\right) e_0+\xi_k+O(\|\xi_k\|^3), \quad   \xi_k\perp e_0 .
\]
Expanding the scalar products yields
\[
X_{kl} = -\frac12\|\xi_k-\xi_l\|^2 +i\,\operatorname{Im}(\xi_k^\dag \xi_l) +O(\xi^3). 
\]
Substituting this expression into the second-order expansion and collecting terms gives
\be
\operatorname{per}(A\circ B)-\operatorname{per}(B)
= -\frac{(n-1)!}{2}
\sum_{k,l}
\|\xi_k-\xi_l\|^2
+ O(\xi^3).
\en{2ndOrdExp}
Moreover, the quadratic form
\[
\sum_{k,l}\|\xi_k-\xi_l\|^2
\]
vanishes only when  $\xi_1=\cdots=\xi_n$, i.e., if and only if  all $u_k$ coincide modulo individual phases. Equivalently we get the equality sign in the second-order expansion Eq.~(\ref{2ndOrdExp})  when $A$ is rank one.  Before gauge fixing, this corresponds to $ A_{kl}=e^{i(\phi_l-\phi_k)} $, so that 
\[
A\circ B = D^*BD
\]
for a diagonal unitary matrix $D$.  Therefore 
\[
\operatorname{per}(A\circ B)
= \operatorname{per}(B)
\]
exactly. Consequently, in a sufficiently small neighborhood of the rank-one correlation matrix  $E$,
\[
\operatorname{per}(A\circ B)\le \operatorname{per}(B)
\]
with equality only for the trivial rank-one phase gauges.

\subsection{II. Particles in identical mixed internal states}

Consider identical particles prepared in the same mixed   state
\be
\hat{\varrho}_1=\sum_j  q_j |f_j\rangle\langle f_j|,\quad q_j>0, \quad \sum_j q_j=1, 
\en{state_rho} 
so that the corresponding internal state is  $\hat{\varrho}_{(int)} = \hat{\varrho}^{(l)}= \hat{\varrho}_1^{\otimes n}$. For such states, the partial indistinguishability function,  Eq. (\ref{distJ}),  reads   \cite{Shch2014,Shch2015}
\be
J_{\hat{\varrho}_1^{\otimes n}}(\sigma) = \operatorname{Tr}(\hat{P}_\sigma \hat{\varrho}_1^{\otimes n})  = \sum_{\mathbf j} q_{j_1}\cdots q_{j_n} \prod_{k=1}^n \delta_{j_k,j_{\sigma(k)}} =   \prod_{k=1}^{n} \left(\sum_j q_j^k\right)^{C_k(\sigma)},
\en{J_part}
 where    $C_k(\sigma)$ denotes the number of cycles of length $k$ in the cycle decomposition of  $\sigma$. 
 
 A crucial observation is that all cycle contributions in Eq.~(\ref{J_part}) are nonnegative. Therefore the phase-cancellation mechanism responsible for the indistinguishability boost discussed in the main text is absent.

If the above  identical   identical particles in the spatial/temporal states $\hat{\varrho}_1$ have  additionally   different spin/polarization states 
$|\phi_\alpha\rangle$, $\alpha =1,\ldots, n$ with the correlation matrix $A_{kl}=\langle \phi_k|\phi_l\rangle$, the particle  label state  becomes tensor product state 
\be
\hat{\varrho}^{(l)} =\hat{\varrho}_1^{\otimes n}\bigotimes_{\alpha=1}^n |\phi_\alpha\rangle \langle \phi_\alpha|
\en{state_int}
(in this case  the internal state becomes $\hat{\varrho}_{(int)} = \frac{1}{n!} \sum_\sigma \hat{P}_\sigma \hat{\varrho}^{(l)} \hat{P}^\dag_\sigma \ne \hat{\varrho}^{(l)} $, in general). 
The new   indistinguishability function reads 
\begin{equation}
  J_{A\circ \hat{\varrho}_1^{\otimes n}} (\sigma) = J_{\hat{\varrho}_1^{\otimes n}}(\sigma) \prod_{k=1}^n A_{k,\sigma(k)}.
\end{equation}
The resulting indistinguishability measure can be written as an average of products of permanents of principal submatrices of $A$:
 \begin{eqnarray}
 \label{DrhoA}
\mathcal{D}_{A\circ\hat{\varrho}_1^{\otimes n}}  &= &\frac{1}{n!} \sum_{\mathbf j} q_{j_1}\cdots q_{j_n}\sum_{\sigma} \prod_{k=1}^n \delta_{j_k,j_{\sigma(k)}}A_{k,\sigma(k)}
=  \frac{1}{n!}\sum_{\mathbf j} q_{j_1}\cdots q_{j_n} \prod_\alpha \operatorname{per} A[I_\alpha(\mathbf j	)]\nonumber\\
&=&\frac{1}{n!} \sum_{\{I_\alpha\}}\prod_{\alpha}q_\alpha^{|I_\alpha|}\operatorname{per} A[I_\alpha],
\end{eqnarray}
where $I_\alpha(\mathbf j)=\{k:j_k=\alpha\}$ defines the blocks of equal labels in the configuration $\mathbf j=(j_1,\ldots,j_n)$ and $A[I_\alpha]$ denotes the principal submatrix of A indexed by $I_\alpha$. The Kronecker constraints imply that only permutations preserving each block contribute, replacing the symmetric group by the corresponding product of Young subgroups.  The final summation runs over all labelled partitions of $\{1,2,\ldots, n\}$.

For identical spin/polarization states, corresponding to $A=E$, Eq.~(\ref{DrhoA}) reduces to
\be
 \mathcal{D}_{E\circ\hat{\varrho}_1^{\otimes n}} = \frac{1}{n!} \sum_{\{I_\alpha\}}\prod_{\alpha}q_\alpha^{|I_\alpha|}| I_\alpha|!
=\sum_{\sum\limits_\alpha n_\alpha=n} \prod_{\alpha}q_\alpha^{n_\alpha}.
\en{Drho}
Since every correlation matrix satisfies   $ \operatorname{per}A[I_\alpha]\le |I_\alpha|!$ for every index set   $I_\alpha$,  Eq.~(\ref{DrhoA}) immediately yields
\be
\mathcal{D}_{A\circ \hat{\varrho}_1^{\otimes n}}   \le \mathcal{D}_{E\circ \hat{\varrho}_1^{\otimes n}}. 
\en{DrhoA<Drho} 
 Thus additional labels cannot increase the indistinguishability of particles prepared in identical mixed internal states.


\begin{thebibliography}{9}
\bibitem{HOM} C. K. Hong, Z. Y. Ou, and L. Mandel, 
Phys. Rev. Lett. \textbf{59}, 2044 (1987). 


\bibitem{ElecHOM} R. C. Liu, B. Odom, Y. Yamamoto, and  S. Tarucha,
Nature \textbf{391}, 263 (1998). 


\bibitem{AtomHOM} R. Lopes, A. Imanaliev, A. Aspect M. Cheneau,  D. Boiron and C. I. Westbrook,  
Nature 520 66 (2015). 

\bibitem{TwFerQW} L. Sansoni, F. Sciarrino, G.  Vallone, P. Mataloni, A. Crespi, R. Ramponi, and R. Osellame, 
Phys. Rev. Lett. \textbf{108}, 010502 (2012). 

\bibitem{Ou1} Z. Y. Ou, 
Phys. Rev. A \textbf{74}, 063808 (2006). 
\bibitem{Ou2} Z. Y. Ou, 
Phys. Rev. A \textbf{77}, 043829 (2008). 

\bibitem{GenHOM} Y. L.  Lim and A.  Beige, 
New J. Phys. \textbf{7},  155 (2005). 

\bibitem{SymBeyBS} M. C. Tichy, M. Tiersch, F. Mintert, and A. Buchleitner, 
New J. Phys. \textbf{14}, 093015 (2012).

\bibitem{ZeroTranSymm} C.  Dittel,  G.  Dufour,  M.  Walschaers, G.  Weihs,  A. Buchleitner,  and R. Keil, 
Phys. Rev. A \textbf{97}, 062116 (2018).

\bibitem{Symm4Dist} J.  M\"unzberg,  C.  Dittel, M. Lebugle,  A.  Buchleitner, A. Szameit,  G. Weihs,  and R. Keil, 
PRX Quantum 2, 020326 (2021).


\bibitem{MetHOM} E. Descamps,  A. Keller,  and P. Milman, Phys. Rev. Lett. \textbf{136}, 060807 (2026). 

\bibitem{Shch2015} V. S. Shchesnovich, 
Phys. Rev. A \textbf{91}, 013844 (2015).


\bibitem{Tch2015} M. C. Tichy, Phys. 
Rev. A \textbf{91}, 022316 (2015).


\bibitem{WeylD} M. Tillmann, S.-H. Tan, S. E. Stoeckl, B. C. Sanders, H. de Guise, R. Heilmann, S. Nolte, A. Szameit, and  P.~Walther, 
Phys. Rev. X \textbf{5}, 041015 (2015).

\bibitem{NonMon4ph}  Y.-­S. Ra, M. C. Tichy, H.-T. Lim, O. Kwon, F. Mintert, A. Buchleitner, Y.-H. Kim, 
Proc. Natl. Acad. Sci. U.S.A. \textbf{110}, 1227  (2013).

\bibitem{3phPhase} A. J. Menssen, A. E. Jones, B. J. Metcalf, M. C. Tichy, S. Barz, W. S. Kolthammer, and I. A. Walmsley, 
Phys. Rev. Lett. \textbf{118}, 153603 (2017). 

\bibitem{DistMix3ph} A. E. Jones,  S. Kumar,  S. D’Aurelio, M. Bayerbach, A. J. Menssen,  and S.  Barz, 
Phys. Rev. A  \textbf{108}, 053701 (2023). 

\bibitem{nphPhases} V. S. Shchesnovich and M. E. O. Bezerra, 
Phys. Rev. A \textbf{98}, 033805 (2018).

\bibitem{DistPhInter} A. E. Jones, A. J. Menssen, H. M. Chrzanowski, T. A. W. Wolterink, V. S. Shchesnovich,
I. A. Walmsley, 
Phys. Rev. Lett. \textbf{125}, 123603 (2020).


\bibitem{MultPhInd} M. Pont, R. Albiero, S. E. Thomas, N. Spagnolo, F. Ceccarelli, G. Corrielli, \textit{et al},
Phys. Rev. X \textbf{12}, 031033 (2022).

\bibitem{Distchar} S. N. van den Hoven, M. C. Anguita, S. Marzban, and J.~J. Renema, 
arXiv:2512.04903v1 [quant-ph]. 

 \bibitem{VS2016} V. S. Shchesnovich, 
Phys. Rev. Lett. \textbf{116}, 123601 (2016). 


\bibitem{BCount}  B. Seron, L.  Novo and N. J.  Cerf, Nat. Photon. \textbf{17}, 702 (2023).


\bibitem{Geller2026} S.  Geller  and E.  Knill, 
Phys. Rev A \textbf{113}, 042606 (2026).

\bibitem{Shch2014} V. S. Shchesnovich, 
Phys. Rev. A \textbf{89}, 022333 (2014).

\bibitem{Shch2015A} V. S. Shchesnovich, 
Phys. Rev. A \textbf{91},   063842  (2015). 

\bibitem{AA} S. Aaronson and A. Arkhipov, 
Theory of  Computing \textbf{9},  143 (2013).

\bibitem{20ph60mod} H.  Wang, J.  Qin, X. Ding, M.-C. Chen, S. Chen, X. You  \textit{et al},
Phys. Rev. Lett. \textbf{123},  250503 (2019).


\bibitem{SimBSdist} J. J. Renema, A. Menssen, W. R. Clements, G. Triginer, W. S. Kolthammer, and I. A. Walmsley, 
Phys. Rev. Lett. \textbf{120}, 220502 (2018).


\bibitem{LOC} E. Knill, R. Laflamme and G. J. Milburn, 
Nature \textbf{409}, 46 (2001). 

\bibitem{RevLOC} P. Kok, W. J. Munro, K. Nemoto, T. C. Ralph, J. P. Dowling and G. J. Milburn, 
Rev. Mod. Phys. \textbf{79}, 135 (2007). 

\bibitem{Note} E.g., for $n\ge 2$ fermions  per  visible mode $ \hat{S}^{(-)}\hat{\varrho}_{(vis)}=0$.   	
	
\bibitem{MCMS}  J.-D. Urbina, J. Kuipers, S. Matsumoto, Q. Hummel, and K. Richter, Boson Sampling, 
Phys. Rev. Lett. \textbf{116}, 100401 (2016).

\bibitem{NBE} T. Br\"unner, G.  Dufour,  A.  Rodríguez, and A.  Buchleitner, 
Phys. Rev. Lett.  \textbf{120}, 210401 (2018). 

\bibitem{NBS} N. Spagnolo, D. J. Brod, E. F. Galv\~ao,  and F.  Sciarrino, 
npj Quant. Inform. \textbf{9}, 3  (2023). 

  
 \bibitem{Note2}  Since $\hat{\varrho}$ commutes with  the    projector $\hat{S}^{(\pm)}$,   we can use   observables   having no definite symmetry.
 
\bibitem{BookNC} M. A. Nielsen and  I. L. Chuang, \textit{Quantum Computation and Quantum Information}, Quantum Information: 10th Anniversary Edition, 10th ed.
(Cambridge University Press, New York, NY, 2011). 
 
 \bibitem{BSF} V. S. Shchesnovich, 
 Int.  Journal of Quant.  Inf. \textbf{13},   1550013 (2015). 

\bibitem{DistNew} M. Englbrecht,  T.  Kraft, C.  Dittel,   A.  Buchleitner, G.  Giedke  and B.  Kraus, 
Phys. Rev. Lett.  \textbf{132}, 050201 (2024). 

\bibitem{PartDistInv} E. Annoni  and S. C. Wein, 
arXiv:2502.05047 [quant-ph]. 





\bibitem{Drury2016} S. W. Drury, 
Electron. J. Linear Algebra \textbf{31}, 69 (2016).


  

 


 

\bibitem{QDs} P. Michler, A. Kiraz, C. Becher, W. V. Schoenfeld, P. M. Petroff, L. Zhang, E. Hu, and A. Imamoglu, 
Science \textbf{290},  228 (2000). 
 
\bibitem{QDsHOM} C. Santori, D. Fattal, J. Vu\v{c}kovi\`c, G. S. Solomon and  Y. Yamamoto, 
Nature \textbf{419},  594 (2002). 

\bibitem{QDsHOM2} N. Margaria,  F. Pastier, T. Bennour, M. Billard, E. Ivanov, W. Hease   \textit{et al}, 
Nat.  Commun. \textbf{16}, 7553 (2025).

\bibitem{R2Ds} D. G\`erard, S. Buil, K. Watanabe, T. Taniguchi, J.-P. Hermier,  and  A. Delteil, 
Nat.  Commun. \textbf{17}, 1843 (2026).

\end{thebibliography}
\end{document}